\newcommand{\rmnum}[1]{\romannumeral #1}
\newcommand{\Rmnum}[1]{\expandafter\@slowromancap\romannumeral #1@}
\title{On Geometric Prototype and Applications}
\titlerunning{On Geometric Prototype} 
\author[1]{Hu Ding}
\author[2]{Manni Liu}
\affil[1]{Michigan State University, East Lansing, USA\\
  \texttt{huding@msu.edu}}
\affil[2]{Michigan State University, East Lansing, USA\\
  \texttt{liumanni@msu.edu}}
\authorrunning{H. Ding and M. Liu} 
\subjclass{F.2.2 Nonnumerical Algorithms and Problems - Geometrical
problems and computations}
\keywords{prototype, hardness, core-set, Wasserstein barycenter, ensemble clustering}
\begin{document}

\maketitle

\begin{abstract}
In this paper, we propose to study a new geometric optimization problem called ``geometric prototype'' in Euclidean space. Given a set of patterns, where each pattern is represented by a (weighted or unweighted) point set, the geometric prototype can be viewed as the ``mean pattern'' minimizing the total matching cost to them. As a general model, the problem finds many applications in the areas like machine learning, data mining, computer vision, etc. The dimensionality could be either constant or high, depending on the applications. To our best knowledge, the general geometric prototype problem has yet to be seriously considered by the theory community. To bridge the gap between theory and practice, we first show that a small core-set can be obtained to substantially reduce the data size. Consequently, any existing heuristic or algorithm can run on the core-set to achieve a great improvement on the efficiency. As a new application of core-set, it needs to tackle a couple of challenges particularly in theory. Finally, we test our method on both 2D image and high dimensional clustering datasets; the experimental results remain stable even if we run the algorithms on the core-sets much smaller than the original datasets, while the running times are reduced significantly.

%
%
%
 \end{abstract}

\section{Introduction}
\label{sec-intro}

%

Given a set of points in Euclidean space, we can easily use their mean point to represent them. However, if they are replaced by a set of point sets where each point set denotes a ``pattern'', the problem of finding their representation will be much more challenging. We call it ``Geometric Prototype'' problem. Before introducing its formal definition, we need to define the matching cost between two patterns first.

\begin{definition}[$\mathcal{M}(A, B)$]
\label{def-match}
Given two point sets $A=\{a_1, a_2, \cdots, a_k\}$ and $B=\{b_1, b_2, \cdots, b_k\}$ in $\mathbb{R}^d$, 
\begin{eqnarray}
\mathcal{M}(A,B)=\min_{\pi\in\Pi}\sum^k_{j=1}||a_j-b_{\pi(j)}||^2 \label{for-match}
\end{eqnarray}
where $\Pi$ contains all the possible permutations of $\{1, 2, \cdots, k\}$. 
\end{definition}

$\mathcal{M}(A, B)$ is in fact the problem of geometric matching which can be optimally solved by Hungarian algorithm~\cite{cormenintroduction}.
 When the dimensionality is constant, a number of efficient approximation algorithms have been developed in past years (see more discussion in Section~\ref{sec-contribution}).

\begin{definition} [Geometric Prototype]
\label{def-gp}
Given a set of point sets $\mathbb{P}=\{P_1, P_2, \cdots, P_n\}$ with each $P_i$ containing $k$ points $\{p^i_1, p^i_2, \cdots, p^i_k\}\subset\mathbb{R}^d$, the geometric prototype is a new point set $g(\mathbb{P})$ having $k$ points such that 
\begin{eqnarray}
\sum^n_{i=1}\mathcal{M}(P_i, g(\mathbb{P})) \label{for-def-gp}
\end{eqnarray}
is minimized. Note $g(\mathbb{P})$ is not necessarily from $\mathbb{P}$. Also, any $k$-point set achieving at most $c$ times the minimum value of (\ref{for-def-gp}) is called a $c$-approximation with $\forall c\geq 1$.
\end{definition}
\begin{remark}
It is easy to see that when $k=1$, the geometric prototype is simply the mean point. Actually, the problem of geometric prototype can be viewed as a ``chromatic $k$-means clustering''. The $kn$ points of $\cup^n_{i=1}P_i$ form $k$ clusters where the $k$ points of each $P_i$ should be assigned to the $k$ clusters separately; to minimize the objective function~(\ref{for-def-gp}), the $k$ points of $g(\mathbb{P})$ should be the mean points of the resulting clusters.
\end{remark}

In Definition~\ref{def-gp}, the dimension $d$ could be either constant or high depending on the applications, and $n$ usually is large ($k$ could be not constant, but often much smaller than $n$ in the applications). To our best knowledge, the general geometric prototype problem has never been systematically studied in the area of computational geometry (except some special cases; see Section~\ref{sec-contribution}), but finds many real-world applications recently. Below, we introduce two important applications in low and high dimension, respectively.

\vspace{0.05in}

\textbf{(1) Wasserstein Barycenter.} Given a large set of images, finding their average yields several benefits in practice. For example, if all the images are taken from the same object but have certain extents of noise, their average image could serve as a robust pattern to represent them; also, this is an efficient way to compress large image datasets. In computer vision, Earth Mover's Distance (EMD)~\cite{rubner2000earth} is widely used to measure the difference between two images; the average image minimizing the total EMDs to all the images is defined as the {\em Wasserstein Barycenter}~\cite{cuturi2014fast,baum2015wasserstein,gramfort2015fast,ye2017fast,DBLP:journals/siamsc/BenamouCCNP15}.  In addition, Ding and Xu~\cite{ding2013k,ding2014finding} considered the case allowing rigid/affine transformations for each image. Wasserstein Barycenter can also be applied to Bayesian inference~\cite{staib2017parallel}. Note that the geometric prototype defined above is not exactly equivalent to Wasserstein Barycenter, because the latter one requires each point having a non-negative weight and EMD is to minimize the max flow cost; however, the techniques proposed in this paper can be easily extended to handle EMD and we will discuss it later.

\vspace{0.05in}

\textbf{(2) Ensemble Clustering.} Given a number of different clustering solutions for the same set of items, the problem about finding a unified clustering solution minimizing the total differences to them is called {\em ensemble clustering}~\cite{GA13}. This problem has attracted a great deal of attention, especially for the applications in big data and crowdsourcing~\cite{SG02,GHL,GLF,SMP}. For example, due to the proliferation of networked sensing systems, we can use a large number of sensors to record the same environment and each sensor can generate an individual clustering for the same set of objects. However, most of existing approaches rely on algebraic or graphic models and need to solve  complicated optimizations with high complexities (such as semi-definite programming~\cite{SMP}). 

Recently, Ding et al.~\cite{DBLP:conf/mobihoc/DingSX16} present a novel high dimensional geometric model for the problem of ensemble clustering: suppose there are $d$ items and each clustering solution has $k$ clusters on these items (if less than $k$, we can add some dummy empty clusters); then, each single cluster is mapped to a binary vector in $\mathbb{R}^d$ where each dimension indicates the membership of an individual item (see Figure~\ref{fig-map}); so each clustering solution is mapped to a $k$-point set in $\mathbb{R}^d$; the size of the symmetric difference between two clusters is equal to their squared distance in $\mathbb{R}^d$, and thus the difference between two clustering solutions is always equal to half of their matching cost (Definition~\ref{def-match}) in Euclidean space. Therefore, finding the final clustering solution minimizing the total differences to the given solutions is equivalent to computing the geometric prototype of the resulting $k$-point sets in $\mathbb{R}^d$. Please find more details in~\cite{DBLP:conf/mobihoc/DingSX16}. Note that the obtained geometric prototype may result in fractional clustering memberships, because the points of the geometric prototype are not necessarily binary vectors. So the approximation result in~\cite{DBLP:conf/mobihoc/DingSX16} does not violate the APX-hardness for strict ensemble/consensus clustering~\cite{bonizzoni2008approximation}. Actually fractional clustering memberships are acceptable and make sense in practice; for instance, we may claim that one object belongs to class 1, 2, and 3 with probabilities of $70\%$, $20\%$, and $10\%$,  respectively.
\begin{figure}[h]
\centering
\includegraphics[height=1in]{{map3}}
\vspace{-0.1in}
\caption{$d=10$ and $k=3$. The three clusters are mapped to $3$ binary vectors in $\mathbb{R}^{10}$.}
\label{fig-map}
\end{figure}
\vspace{-0.2in}

\subsection{Our Main Contributions and Related Work}
\label{sec-contribution}

Due to the non-convex nature of geometric prototype problem, most of the aforementioned approaches for Wasserstein barycenter~\cite{cuturi2014fast,baum2015wasserstein,gramfort2015fast,ye2017fast,DBLP:journals/siamsc/BenamouCCNP15} and large-scale ensemble clustering~\cite{DBLP:conf/mobihoc/DingSX16} are iterative algorithms, such as alternating minimization and Alternating Direction Method of Multipliers (ADMM)~\cite{boyd2011distributed}, which can converge to some local optimums. Those approaches could be very slow for large datasets, because they may run many rounds and each round usually needs to conduct some complicated update or optimization. This is also the main motivation of our work, that is, replacing the original large input by a small core-set to speed up the computation of existing algorithms.

In this paper, our contribution is twofold in the aspects of theory and applications. In theory, we show that a small core-set can be obtained for the problem of geometric prototype. 
More importantly, our core-set is independent of any geometric prototype algorithm; namely,  we can run any available algorithm as a black box on the core-set, instead of the original instance $\mathbb{P}$, to achieve a similar result. Although core-set has been extensively studied for many applications before~\cite{DBLP:journals/corr/Phillips16,agarwal2005geometric}, we still need to tackle several significant challenges when constructing the core-set for geometric prototype. In practice, we test our method for solving the applications Wasserstein barycenter and ensemble clustering. The experiment shows that running the existing algorithms on core-sets can achieve almost the same results while the running times are substantially reduced.

\textbf{Related work.} The general geometric prototype problem has yet to be seriously considered by the theory community (to our best knowledge), however, some special cases were studied before. Based on the remark below Definition~\ref{def-gp}, we know that finding the geometric prototype is also a chromatic clustering problem. Motivated by the application of managing traffic flows, Arkin et al.~\cite{arkin2015bichromatic} studied a variety of chromatic $2$-center clustering in 2D and gave both exact and approximation solutions. In addition, Ding and Xu~\cite{Din11,DX15} studied chromatic clustering in high dimension; however, their method assumes that $k$ is constant and thus it is unable to be extended to our general geometric prototype problem. 

Computing the geometric matching $\mathcal{M}(A,B)$ is a sub-problem of geometric prototype. Besides Hungarian algorithm~\cite{cormenintroduction}, the computational geometry community has extensively studied its approximation algorithms for the case in constant dimension~\cite{DBLP:conf/compgeom/AgarwalFPVX17,DBLP:conf/compgeom/AgarwalV04,DBLP:conf/soda/SharathkumarA12,DBLP:conf/stoc/SharathkumarA12,DBLP:conf/stoc/AndoniNOY14}, and some of them can achieve nearly linear running time.  
 
The rest of the paper is organized as follows. We first introduce some basic results and useful tools in Section~\ref{sec-pre}. Then we show our core-set construction and analysis in Section~\ref{sec-coreset}. Finally, we implement our algorithm and  test it on multiple datasets in Section~\ref{sec-exp}.

\section{Preliminaries}
\label{sec-pre}

\textbf{The hardness.} Actually, we are able to show that finding the optimal geometric prototype of a given instance is NP-hard and has no FPTAS even if $k=2$ in high dimensional space, unless P=NP. Our proof makes use of the construction by Dasgupta for the NP-hardness proof of $2$-means clustering problem in high dimension~\cite{D08}. See Section~\ref{sec-hard} in Appendix for details. Moreover, we leave the hardness for the low dimensional case of geometric prototype as an open problem in future work.

The following lemma, which can be easily obtained via Definition~\ref{def-match}, is repeatedly used in our analysis (see the proof in Section~\ref{sec-plem-tri} of Appendix).

\begin{lemma}
\label{lem-tri}
Given three $k$-point sets $A$, $B$, and $C$ in $\mathbb{R}^d$, 
\begin{eqnarray}
\mathcal{M}(A,B)\leq 2\mathcal{M}(A,C)+2\mathcal{M}(C,B).
\end{eqnarray}
\end{lemma}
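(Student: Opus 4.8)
The plan is to produce a single feasible (not necessarily optimal) matching between $A$ and $B$ by composing the optimal matchings witnessed by $\mathcal{M}(A,C)$ and $\mathcal{M}(C,B)$, and then to bound its cost termwise using the elementary inequality $\|x-z\|^2\le 2\|x-y\|^2+2\|y-z\|^2$ (itself a consequence of the triangle inequality followed by $(\alpha+\beta)^2\le 2\alpha^2+2\beta^2$).

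First I would fix the optimal permutations: let $\pi_1\in\Pi$ attain $\mathcal{M}(A,C)=\sum_{j=1}^k\|a_j-c_{\pi_1(j)}\|^2$ and let $\pi_2\in\Pi$ attain $\mathcal{M}(C,B)=\sum_{j=1}^k\|c_j-b_{\pi_2(j)}\|^2$. Then $\pi:=\pi_2\circ\pi_1$ is again a permutation of $\{1,\dots,k\}$, so by the definition of $\mathcal{M}$ as a minimum over $\Pi$ we have $\mathcal{M}(A,B)\le\sum_{j=1}^k\|a_j-b_{\pi(j)}\|^2$. Applying the elementary inequality with $x=a_j$, $y=c_{\pi_1(j)}$, $z=b_{\pi_2(\pi_1(j))}$ to each summand gives
\begin{eqnarray}
\mathcal{M}(A,B)\le 2\sum_{j=1}^k\|a_j-c_{\pi_1(j)}\|^2+2\sum_{j=1}^k\|c_{\pi_1(j)}-b_{\pi_2(\pi_1(j))}\|^2. \nonumber
\end{eqnarray}

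The first sum on the right is exactly $2\mathcal{M}(A,C)$. For the second sum I would use that, since $\pi_1$ is a bijection, re-indexing by $\ell=\pi_1(j)$ turns it into $\sum_{\ell=1}^k\|c_\ell-b_{\pi_2(\ell)}\|^2=\mathcal{M}(C,B)$, yielding the claimed bound $\mathcal{M}(A,B)\le 2\mathcal{M}(A,C)+2\mathcal{M}(C,B)$. There is essentially no hard step here; the only point requiring a moment's care is observing that the composition of the two optimal permutations is a legitimate competitor in the minimization defining $\mathcal{M}(A,B)$ and that the re-indexing of the second sum is valid because $\pi_1$ is a bijection.
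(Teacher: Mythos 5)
Your proof is correct and takes essentially the same route as the paper: both arguments compose the two optimal matchings into a feasible competitor for $\mathcal{M}(A,B)$ and then bound termwise via $\|x-z\|^2\le 2\|x-y\|^2+2\|y-z\|^2$. The only cosmetic difference is that the paper relabels the points so that both optimal permutations become the identity (its ``W.l.o.g.'' step), whereas you carry $\pi_1,\pi_2$ explicitly and re-index; these are the same idea written in two notations.
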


Using Markov inequality and Lemma~\ref{lem-tri}, Ding et al.~\cite{DBLP:conf/mobihoc/DingSX16} showed that a constant approximation can be achieved with constant probability.

\begin{theorem}[\cite{DBLP:conf/mobihoc/DingSX16}]
\label{the-constant}
Let $\alpha>1$. Given an instance $\mathbb{P}$ of geometric prototype problem, if we randomly pick a point set $P_{i_0}$ from $\mathbb{P}$, then with probability at least $1-\frac{1}{\alpha}$,  $\mathcal{M}(P_{i_0}, g(\mathbb{P}))$ is no larger than $\frac{\alpha}{n}\sum^n_{i=1}\mathcal{M}(P_i, g(\mathbb{P}))$ and $P_{i_0}$ yields a $(2\alpha+2)$-approximation.
\end{theorem}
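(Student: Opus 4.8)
The plan is to combine a one-line averaging argument (Markov's inequality) with the approximate triangle inequality of Lemma~\ref{lem-tri}. Write $W^\ast := \sum_{i=1}^n \mathcal{M}(P_i, g(\mathbb{P}))$ for the optimal objective value of~(\ref{for-def-gp}). Since $i_0$ is chosen uniformly at random from $\{1,\dots,n\}$ and each term $\mathcal{M}(P_i, g(\mathbb{P}))$ is non-negative, the random variable $\mathcal{M}(P_{i_0}, g(\mathbb{P}))$ has expectation $W^\ast/n$. Applying Markov's inequality with threshold $\tfrac{\alpha}{n} W^\ast$ gives $\Pr\big[\mathcal{M}(P_{i_0}, g(\mathbb{P})) > \tfrac{\alpha}{n} W^\ast\big] < \tfrac{1}{\alpha}$, which is exactly the first claim: with probability at least $1-\tfrac{1}{\alpha}$ the ``good event'' $\mathcal{M}(P_{i_0}, g(\mathbb{P})) \le \tfrac{\alpha}{n} W^\ast$ holds.

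Conditioning on this good event, the next step is to bound the cost incurred by using $P_{i_0}$ itself as a candidate prototype. Applying Lemma~\ref{lem-tri} with $A = P_i$, $C = g(\mathbb{P})$, and $B = P_{i_0}$ yields $\mathcal{M}(P_i, P_{i_0}) \le 2\mathcal{M}(P_i, g(\mathbb{P})) + 2\mathcal{M}(g(\mathbb{P}), P_{i_0})$ for every $i$, where I use that $\mathcal{M}(\cdot,\cdot)$ is symmetric (by Definition~\ref{def-match}, since the permutation group is closed under inverses) so that $\mathcal{M}(g(\mathbb{P}), P_{i_0}) = \mathcal{M}(P_{i_0}, g(\mathbb{P}))$. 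Summing over $i = 1, \dots, n$ gives $\sum_{i=1}^n \mathcal{M}(P_i, P_{i_0}) \le 2 W^\ast + 2n\,\mathcal{M}(P_{i_0}, g(\mathbb{P}))$, and substituting the good-event bound $\mathcal{M}(P_{i_0}, g(\mathbb{P})) \le \tfrac{\alpha}{n} W^\ast$ produces $\sum_{i=1}^n \mathcal{M}(P_i, P_{i_0}) \le (2\alpha + 2) W^\ast$. Since $W^\ast$ is by definition the minimum value of~(\ref{for-def-gp}), this certifies that $P_{i_0}$ is a $(2\alpha+2)$-approximate geometric prototype, finishing the proof.

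There is no genuinely hard step here; the argument is short once Lemma~\ref{lem-tri} is in hand. The only two points that warrant a moment of care are (i) verifying the symmetry of $\mathcal{M}$ so that the two cross terms in the triangle inequality collapse into a single quantity controlled by the good event, and (ii) noting that a candidate prototype need not be drawn from $\mathbb{P}$ (as stated in Definition~\ref{def-gp}), so it is legitimate to plug the specific $k$-point set $P_{i_0}$ into the objective and compare its value against the optimum $W^\ast$.
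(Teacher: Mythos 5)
Your proof is correct and follows exactly the route the paper indicates for this cited result: Markov's inequality applied to the non-negative random variable $\mathcal{M}(P_{i_0}, g(\mathbb{P}))$ whose mean is $\frac{1}{n}\sum_i \mathcal{M}(P_i, g(\mathbb{P}))$, followed by Lemma~\ref{lem-tri} (with $C = g(\mathbb{P})$) summed over $i$ to convert the good-event bound into the $(2\alpha+2)$-approximation. Your side remarks on the symmetry of $\mathcal{M}$ and on $P_{i_0}$ being a legitimate candidate prototype are both correct and worth stating.
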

\begin{remark}
To boost the success probability, we can try multiple times and select the one yielding the lowest objective value. For example, if we try $t$ times, the success probability will be $1-\frac{1}{\alpha^t}$. 
\end{remark}

According to Theorem~\ref{the-constant}, the selected $P_{i_0}$ could serve as a good initialization for the geometric prototype. To further improve the approximation ratio, the algorithm in~\cite{DBLP:conf/mobihoc/DingSX16} adopts a simple alternating minimization procedure, i.e., alternatively updating the prototype and matchings round by round. 
The main drawback of this algorithm is that it needs to repeatedly compute the matchings between the prototype and all the given point sets in each round, and thus the running time is high especially when some or all of $n$, $k$, and $d$ are large (as discussed at the beginning of Section~\ref{sec-contribution}).


In addition, we are able to apply the well known Johnson-Lindenstrauss (JL) lemma~\cite{achlioptas2003database} to reduce the dimensionality before running the algorithm; also, the obtained geometric prototype in the lower dimension can be efficiently mapped back to the original space~\cite{DBLP:conf/mobihoc/DingSX16}. 


\begin{theorem}[\cite{DBLP:conf/mobihoc/DingSX16}]
\label{the-jl}
Let $0<\epsilon<1$ and $c\geq 1$. Suppose we randomly project a given instance $\mathbb{P}$ of geometric prototype problem from $\mathbb{R}^d$ to $\mathbb{R}^{O(\log (nk)/\epsilon^2)}$ and obtain a new instance $\mathbb{P}'$ in the lower dimension. Then, with high probability, we can convert any $c$-approximation for $\mathbb{P}'$ to a $c(\frac{1+\epsilon}{1-\epsilon})^2$-approximation for $\mathbb{P}$ in $\mathbb{R}^d$, in $O(nkd)$ time.
\end{theorem}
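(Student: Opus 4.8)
The plan is to invoke the Johnson–Lindenstrauss lemma in its ``distance-preserving'' form and then argue that the matching cost $\mathcal{M}(\cdot,\cdot)$, being a sum of squared Euclidean distances under an optimal permutation, is distorted by at most the usual $(1\pm\epsilon)$ factors. First I would collect all the relevant points: the input instance $\mathbb{P}$ consists of $n$ point sets $P_1,\dots,P_n$, each with $k$ points in $\mathbb{R}^d$, so there are $N=nk$ points total. Apply a random projection $f:\mathbb{R}^d\to\mathbb{R}^{m}$ with $m=O(\log N/\epsilon^2)=O(\log(nk)/\epsilon^2)$. By the JL lemma (e.g.\ the version in~\cite{achlioptas2003database}), with high probability $f$ preserves all pairwise squared distances among these $N$ points up to a factor of $(1\pm\epsilon)$. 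The cost of forming $f$ and applying it to all $nk$ points is $O(nkd)$ (or $O(nkm)$ for the sparse constructions, which is dominated by $O(nkd)$), which accounts for the stated running time; converting a prototype in $\mathbb{R}^m$ back to $\mathbb{R}^d$ can be done by the mapping described in~\cite{DBLP:conf/mobihoc/DingSX16} within the same time bound.

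The core of the argument is that the JL guarantee on the $nk$ input points is not quite enough, because the geometric prototype $g(\mathbb{P})$ and $g(\mathbb{P}')$ are \emph{new} points not in the input, and the squared distances from them to the $P_i$ are what enter~(\ref{for-def-gp}). The standard fix: we never need JL to control distances to $g(\mathbb{P})$ directly. Instead, let $Q'$ be any $c$-approximate prototype for $\mathbb{P}'$ in $\mathbb{R}^m$, and let $Q$ be its pull-back to $\mathbb{R}^d$ obtained so that the matching structure is preserved (each point of $Q$ maps under $f$ to the corresponding point of $Q'$, up to the JL distortion; concretely $Q$ is constructed from $Q'$ via the procedure of~\cite{DBLP:conf/mobihoc/DingSX16}). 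Then for each $i$, fixing the optimal permutation for the pair $(P_i,Q)$ in $\mathbb{R}^d$ and noting that $f$ distorts each squared term by at most $(1\pm\epsilon)$, we get $\mathcal{M}(f(P_i),Q')\le (1+\epsilon)\mathcal{M}(P_i,Q)$ and, using the reverse direction and that the optimal permutation in $\mathbb{R}^m$ is a feasible permutation in $\mathbb{R}^d$, $\mathcal{M}(P_i,Q)\le \frac{1}{1-\epsilon}\mathcal{M}(f(P_i),Q')$. Summing over $i$ chains these with the $c$-approximation guarantee and with the same two inequalities applied to $g(\mathbb{P})$ and its image $f(g(\mathbb{P}))$ (a feasible — not necessarily optimal — prototype for $\mathbb{P}'$), yielding
\begin{equation*}
\sum_{i=1}^n \mathcal{M}(P_i,Q)\ \le\ \frac{1}{1-\epsilon}\sum_{i=1}^n\mathcal{M}(f(P_i),Q')\ \le\ \frac{c}{1-\epsilon}\sum_{i=1}^n\mathcal{M}(f(P_i),g(\mathbb{P}'))\ \le\ \frac{c(1+\epsilon)}{1-\epsilon}\sum_{i=1}^n\mathcal{M}(P_i,g(\mathbb{P})),
\end{equation*}
where the middle step uses that $g(\mathbb{P}')$ is optimal for $\mathbb{P}'$ and hence beats $f(g(\mathbb{P}))$, and the last step uses $\mathcal{M}(f(P_i),f(g(\mathbb{P})))\le (1+\epsilon)\mathcal{M}(P_i,g(\mathbb{P}))$. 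Dividing by $\mathrm{OPT}=\sum_i\mathcal{M}(P_i,g(\mathbb{P}))$ gives a $\frac{c(1+\epsilon)}{1-\epsilon}$-approximation.

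To reach the slightly weaker bound $c\left(\frac{1+\epsilon}{1-\epsilon}\right)^2$ actually stated — which is what one gets without assuming the pull-back $Q$ maps \emph{exactly} onto $Q'$ — one applies the JL distortion a second time when relating $Q$ back through $f$, i.e.\ one more $(1\pm\epsilon)$ factor is incurred in the step $\mathcal{M}(P_i,Q)\leftrightarrow\mathcal{M}(f(P_i),Q')$ because $Q$ itself was produced from $Q'$ only approximately; this is exactly the analysis carried out in~\cite{DBLP:conf/mobihoc/DingSX16}, and I would cite it for the mapping-back construction rather than re-derive it. The main obstacle, and the only place requiring care, is the interface between matching and projection: one must check that the \emph{same} permutation $\pi$ that is optimal on one side is at least \emph{feasible} on the other, so that $\mathcal{M}$ — defined as a minimum over permutations — can be squeezed from both sides; since $f$ preserves every individual squared distance, any permutation's cost scales by $(1\pm\epsilon)$ term-by-term, and taking minima preserves the two-sided bound. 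Everything else (the union bound over $O((nk)^2)$ pairs to set the JL failure probability, the $O(nkd)$ accounting) is routine.
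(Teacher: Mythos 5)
First, note that this paper does not actually \emph{prove} Theorem~\ref{the-jl}: it is stated and cited verbatim from~\cite{DBLP:conf/mobihoc/DingSX16}, so there is no in-paper argument to compare you against. Assessing your proposal on its own merits, the high-level plan (project, solve, lift back by taking cluster means) is the right one, but the central step has a genuine gap that you explicitly flag and then do not actually close. You correctly observe that $g(\mathbb{P})$, $g(\mathbb{P}')$, $Q$, and $Q'$ are \emph{new} points outside the $nk$-point set on which JL is invoked. Your ``standard fix'' nevertheless asserts bounds such as $\mathcal{M}(f(P_i),Q')\le(1+\epsilon)\mathcal{M}(P_i,Q)$ and $\mathcal{M}(f(P_i),f(g(\mathbb{P})))\le(1+\epsilon)\mathcal{M}(P_i,g(\mathbb{P}))$ by claiming that $f$ distorts each term $\|p^i_j-q_{\pi(j)}\|^2$ by at most $(1\pm\epsilon)$. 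But $q_{\pi(j)}$ is a prototype point, not one of the $nk$ points covered by the union bound that fixed $m=O(\log(nk)/\epsilon^2)$; the JL guarantee gives you no control over those pairs. This is precisely the obstacle you raised, and the inequality chain does not remove it.

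The missing ingredient is actually recorded elsewhere in this very paper as identity~(\ref{for-fact1}) in the appendix: $\sum_{i}\|a_i-\mu_A\|^2=\frac{1}{2h}\sum_{i,j}\|a_i-a_j\|^2$. Any geometric-prototype solution can be replaced, without increasing its cost, by the one consisting of cluster means of the chromatic partition it induces. By the identity, the total objective for any chromatic partition is then a non-negative combination of \emph{pairwise} squared distances among the $nk$ input points, which JL does preserve within $(1\pm\epsilon)$ simultaneously for every partition. Combined with linearity of the random projection (so a cluster mean in $\mathbb{R}^m$ is exactly the projection of the corresponding cluster mean in $\mathbb{R}^d$, yielding the $O(nkd)$-time lift), this gives a uniform multiplicative bound relating the $\mathbb{R}^d$ and $\mathbb{R}^m$ objectives over all candidate solutions, which is what the theorem needs. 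Separately, your explanation of the squared factor $\bigl(\tfrac{1+\epsilon}{1-\epsilon}\bigr)^2$ (``one more factor from the pull-back'') is hand-waved: in the partition-based accounting, one $(1\pm\epsilon)$ factor enters when bounding $\mathrm{OPT}'$ by $\mathrm{OPT}$ and one when lifting the projected solution; whether the exponent is $1$ or $2$ depends on whether the JL statement is phrased for distances or squared distances, not on imprecision in the pull-back. None of these side remarks rescue the term-by-term distortion argument, which is where the proposal breaks.
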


%
%
%

The following lemma is a key tool in our analysis. In fact, it can be viewed as an interesting supplement of Lemma~\ref{lem-tri}.

\begin{lemma}
\label{lem-tri2}
Let $A$, $B$, and $C$ be three $k$-point sets in $\mathbb{R}^d$. Then for any $\epsilon>0$, 
\begin{eqnarray}
\Big|\mathcal{M}(A, B)-\mathcal{M}(A, C)\Big|\leq (1+\frac{1}{\epsilon})\mathcal{M}(B,C)+\epsilon\mathcal{M}(A,B)\label{for-tri2}
\end{eqnarray}
\end{lemma}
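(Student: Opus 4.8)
The plan is to reduce inequality~(\ref{for-tri2}) to an application of the triangle inequality for the (non-squared) Euclidean norm together with a careful choice of permutations. Fix the optimal permutation $\pi_{AB}$ realizing $\mathcal{M}(A,B)$ and the optimal permutation $\sigma_{BC}$ realizing $\mathcal{M}(B,C)$. To bound $\mathcal{M}(A,C)$ from above I would use the composite permutation $\sigma_{BC}\circ\pi_{AB}$ (which matches $a_j$ to $c_{\sigma_{BC}(\pi_{AB}(j))}$), so that
\begin{eqnarray}
\mathcal{M}(A,C)\leq\sum_{j=1}^k\big\|a_j-c_{\sigma_{BC}(\pi_{AB}(j))}\big\|^2
\leq\sum_{j=1}^k\big(\|a_j-b_{\pi_{AB}(j)}\|+\|b_{\pi_{AB}(j)}-c_{\sigma_{BC}(\pi_{AB}(j))}\|\big)^2,
\end{eqnarray}
and symmetrically $\mathcal{M}(A,B)\leq\sum_j(\|a_j-c_{\tau(j)}\|+\|c_{\tau(j)}-b_{\cdot}\|)^2$ for the analogous composite of the optimal $A$–$C$ and $C$–$B$ matchings. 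This gives both directions of the absolute value at once.

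Next I would expand the square $(x+y)^2=x^2+2xy+y^2$ and handle the cross term $2xy$ with the weighted AM-GM inequality $2xy\leq\epsilon x^2+\frac{1}{\epsilon}y^2$. Applying this with $x=\|a_j-b_{\pi_{AB}(j)}\|$ (the ``$A$–$B$ part'') and $y=\|b_{\pi_{AB}(j)}-c_{\sigma_{BC}(\pi_{AB}(j))}\|$ (the ``$B$–$C$ part''), summing over $j$, and using that $\sum_j x_j^2=\mathcal{M}(A,B)$ while $\sum_j y_j^2=\sum_j\|b_{\pi_{AB}(j)}-c_{\sigma_{BC}(\pi_{AB}(j))}\|^2$, I want the latter sum to equal $\mathcal{M}(B,C)$. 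This is the point that needs a moment of care: since $\pi_{AB}$ is a bijection on $\{1,\dots,k\}$, reindexing by $j'=\pi_{AB}(j)$ turns $\sum_j\|b_{\pi_{AB}(j)}-c_{\sigma_{BC}(\pi_{AB}(j))}\|^2$ into $\sum_{j'}\|b_{j'}-c_{\sigma_{BC}(j')}\|^2=\mathcal{M}(B,C)$, so the composite permutation really does cost exactly $\mathcal{M}(B,C)$ on the $B$–$C$ edges. Collecting terms yields $\mathcal{M}(A,C)\leq(1+\epsilon)\mathcal{M}(A,B)+(1+\frac{1}{\epsilon})\mathcal{M}(B,C)$, hence $\mathcal{M}(A,C)-\mathcal{M}(A,B)\leq\epsilon\mathcal{M}(A,B)+(1+\frac{1}{\epsilon})\mathcal{M}(B,C)$; the reverse inequality $\mathcal{M}(A,B)-\mathcal{M}(A,C)\leq\epsilon\mathcal{M}(A,B)+(1+\frac{1}{\epsilon})\mathcal{M}(B,C)$ follows by the symmetric argument (composing the optimal $A$–$C$ matching with the optimal $C$–$B$ matching and noting $\mathcal{M}(B,C)=\mathcal{M}(C,B)$, and bounding the cross term again by $\epsilon\mathcal{M}(A,B)+\frac{1}{\epsilon}\mathcal{M}(B,C)$ — here one applies AM-GM against $\mathcal{M}(A,B)$ rather than $\mathcal{M}(A,C)$, which is legitimate since $\sum_j\|a_j-c_{\tau(j)}\|^2=\mathcal{M}(A,C)\le\mathcal{M}(A,B)$ only in the trivial case, so more precisely one first derives $\mathcal{M}(A,B)\le(1+\epsilon)\mathcal{M}(A,C)+(1+\frac1\epsilon)\mathcal{M}(B,C)$ and then reshuffles). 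Combining the two directions gives~(\ref{for-tri2}).

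The main obstacle is not any single estimate but getting the asymmetry of the bound right: the claimed inequality carries the error term $\epsilon\mathcal{M}(A,B)$ on both sides of the absolute value, even though the natural per-direction derivation produces $\epsilon\mathcal{M}(A,B)$ in one direction and $\epsilon\mathcal{M}(A,C)$ in the other. The fix is to split the AM-GM weights so that in the direction bounding $\mathcal{M}(A,B)-\mathcal{M}(A,C)$ one routes the cross term onto $\mathcal{M}(A,B)$ directly — writing $2xy\le\epsilon x^2+\frac1\epsilon y^2$ with $x$ the $A$–$C$ edge length but then bounding the resulting $\mathcal{M}(A,C)$ term trivially, or, cleaner, by choosing in that direction to expand around a near-optimal matching so that the coefficient of $\mathcal{M}(A,B)$ comes out as stated. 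A short case analysis (or simply taking the max of the two bounds and absorbing constants) then delivers the symmetric form~(\ref{for-tri2}); I would double-check that no stray factor of $2$ creeps in from the square expansion, since $(1+\frac1\epsilon)$ rather than $2(1+\frac1\epsilon)$ is what the statement asks for.
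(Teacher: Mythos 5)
Your proposal is correct and takes essentially the same route as the paper: both proofs bound a matching through the intermediate set via a composite permutation, use the triangle inequality, split the cross term with the weighted AM--GM inequality $2xy\le\epsilon x^2+\frac1\epsilon y^2$, and finish by a two-case analysis on the sign of $\mathcal{M}(A,B)-\mathcal{M}(A,C)$ (the paper handles the second case by swapping $B$ and $C$, which is exactly your symmetric argument, and the case hypothesis $\mathcal{M}(A,C)\le\mathcal{M}(A,B)$ is what lets the $\epsilon\mathcal{M}(A,C)$ term be upgraded to $\epsilon\mathcal{M}(A,B)$). The only cosmetic difference is that the paper factors $\|a_j-b_j\|^2-\|a_j-c_j\|^2$ as a difference of squares before applying the triangle inequality, whereas you expand $(x+y)^2$ directly; your worry about a stray factor of $2$ is unfounded, and your noted ``obstacle'' is resolved exactly as you suspect, by the case split.
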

\begin{proof}
Let $A=\{a_1, a_2, \cdots, a_k\}$, $B=\{b_1, b_2, \cdots, b_k\}$, and $C=\{c_1, c_2, \cdots, c_k\}$. 

First, we consider the case $\mathcal{M}(A, B)\geq\mathcal{M}(A, C)$. W.l.o.g, we assume that the induced permutations of $\mathcal{M}(A, C)$ and $\mathcal{M}(B,C)$ are both $\pi(j)=j$ for $1\leq j\leq k$ (since these two permutations are independent with each other). Then we have
\begin{eqnarray}
\Big|\mathcal{M}(A, B)-\mathcal{M}(A, C)\Big|&=&\mathcal{M}(A, B)-\mathcal{M}(A, C)\nonumber\\
&\leq&\sum^k_{j=1}||a_j-b_j||^2-\mathcal{M}(A, C)\nonumber\\
&=&\sum^k_{j=1}\Big(||a_j-b_j||^2-||a_j-c_j||^2\Big)\nonumber\\
&\leq&\sum^k_{j=1}\Big(||a_j-b_j||+||a_j-c_j||\Big)||c_j-b_j||\nonumber\\
&\leq& \sum^k_{j=1}\Big(||c_j-b_j||+2||a_j-c_j||\Big)||c_j-b_j||\nonumber\\
&=&\sum^k_{j=1}\Big(||c_j-b_j||^2+2||a_j-c_j||\cdot||c_j-b_j||\Big) \nonumber\\
&\leq&\sum^k_{j=1}\Big(||c_j-b_j||^2+\epsilon||a_j-c_j||^2+\frac{1}{\epsilon}||c_j-b_j||^2\Big)\nonumber\\
&=&(1+\frac{1}{\epsilon})\mathcal{M}(B, C)+\epsilon\mathcal{M}(A, C)\label{for-tri2-1}
\end{eqnarray}
via repeatedly applying triangle inequality. Because we assume $\mathcal{M}(A, B)\geq\mathcal{M}(A, C)$ for this case, (\ref{for-tri2-1}) implies that (\ref{for-tri2}) holds.

For the other case $\mathcal{M}(A, B)<\mathcal{M}(A, C)$, we directly have
\begin{eqnarray}
\Big|\mathcal{M}(A, B)-\mathcal{M}(A, C)\Big|\leq (1+\frac{1}{\epsilon})\mathcal{M}(B,C)+\epsilon\mathcal{M}(A,B) \label{for-tri2-2}
\end{eqnarray}
by exchanging the roles of $B$ and $C$ in (\ref{for-tri2-1}). 
%
\end{proof}

\section{Core-set for Reducing the Data Size}
\label{sec-coreset}
Langberg and Schulman~\cite{langberg2010universal} introduced a framework of core-set (it was called ``$\epsilon$-approximator'' in their paper) to compress data for several geometric shape fitting problems; further, Feldman and Langberg~\cite{DBLP:conf/stoc/FeldmanL11}  improved the core-set size for a large class of clustering problems. Here, we consider to construct a core-set of the instance $\mathbb{P}$ so as to reduce the data size and running time. Formally, our objective is to find a small sample $\mathbb{S}\subset\mathbb{P}$ and assign a weight $w_l$ for each $P_l\in \mathbb{S}$, such that for any $k$-point set $Q\subset\mathbb{R}^d$, 
\begin{eqnarray}
\Big|\sum_{P_l\in\mathbb{P}}\mathcal{M}(P_l, Q)-\sum_{P_l\in\mathbb{S}}w_l\mathcal{M}(P_l, Q)\Big|\leq O(\epsilon) \sum_{P_l\in\mathbb{P}}\mathcal{M}(P_l, Q) \label{for-coreset}
\end{eqnarray}
with certain probability and small enough $\epsilon>0$. Moreover, we want to keep each weight $w_l$ to be non-negative so as to easily run any existing algorithm or heuristic on the core-set.   
  
%
%
%
%
%
%
%
%
%
Unfortunately, we cannot directly apply the existing ideas to the problem of geometric prototype, because the points from $\cup^n_{i=1}P_i$ are not independent with each other (due to the matching constraint in Definition~\ref{def-match}; also see our remark below Definition~\ref{def-gp}) and it would be much more challenging to build the connection between the sampled core-set and $\mathbb{P}$. Instead, we regard each $P_i$ as an ``abstract point'' and compute a core-set on these $n$ abstract points. Though these abstract points can form some metric space with the matching costs being their pairwise (squared) distances, it is still quite different to metric clustering studied by~\cite{langberg2010universal,DBLP:conf/stoc/FeldmanL11,chen2009coresets}, since the prototype $g(\mathbb{P})$ is not necessarily from $\mathbb{P}$ and could appear anywhere in the Euclidean space. 

Conceptually, the core-set construction is a random sampling process: first, compute an upper bound on the sensitivity $\sigma_{\mathbb{P}}(P_i)$ of each $P_i$ (we will formally define the sensitivity later); then take a sample from $\mathbb{P}$ with  probabilities proportional to $\sigma_{\mathbb{P}}(P_i)$ to form the core-set. To implement this construction,  we have to develop new ideas for resolving the following two issues. \textbf{(\Rmnum{1})}~How to compute $\sigma_{\mathbb{P}}(P_i)$, or its upper bound, so as to generate the probability distribution for sampling. \textbf{(\Rmnum{2})}~What about the sample size. We consider these two issues in Section~\ref{sec-issue1} and \ref{sec-issue2}, respectively. The final result for core-set construction of geometric prototype is presented in Theorem~\ref{the-coreset}. We also discuss some extensions on other metrics (e.g., $l_1$ norm and earth mover's distance) and the time complexity in Section~\ref{sec-coreext} and~\ref{sec-time}, respectively.

\subsection{Solving Issue \Rmnum{1}}
\label{sec-issue1}
Following~\cite{langberg2010universal}, the sensitivity of each $P_i\in \mathbb{P}$ is defined as follows:
\begin{eqnarray}
\sigma_{\mathbb{P}}(P_i)={sup}_{Q}\frac{\mathcal{M}(P_i, Q)}{\sum_{P_l\in\mathbb{P}}\mathcal{M}(P_l, Q)}\label{for-sen1}
\end{eqnarray}
where $Q$ is restricted to be $k$-point set in $\mathbb{R}^d$. Intuitively, the sensitivity measures the importance of each $P_i$ among all the patterns of $\mathbb{P}$. Directly obtaining the value of $\sigma_{\mathbb{P}}(P_i)$ could be challenging and also needless, thus we often turn to compute an upper bound for it. 

Recall that $g(\mathbb{P})$ is the optimal geometric prototype of $\mathbb{P}$, and we let $\Delta=\sum_{P_l\in\mathbb{P}}\mathcal{M}(P_l, g(\mathbb{P}))$ for convenience.

\begin{lemma}
\label{lem-issue1}
For any $P_i\in\mathbb{P}$, $\sigma_{\mathbb{P}}(P_i)\leq \frac{2\mathcal{M}(P_i, g(\mathbb{P}))}{\Delta}+\frac{16}{n}$.
\end{lemma}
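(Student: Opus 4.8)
The plan is to bound the ratio $\frac{\mathcal{M}(P_i,Q)}{\sum_{P_l\in\mathbb{P}}\mathcal{M}(P_l,Q)}$ by $\frac{2\mathcal{M}(P_i,g(\mathbb{P}))}{\Delta}+\frac{16}{n}$ \emph{uniformly} over all $k$-point sets $Q\subset\mathbb{R}^d$, and then take the supremum over $Q$ in the definition~(\ref{for-sen1}). Throughout I write $g=g(\mathbb{P})$.

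First I would split the numerator using Lemma~\ref{lem-tri} with $C=g$:
\begin{eqnarray}
\mathcal{M}(P_i,Q)\leq 2\mathcal{M}(P_i,g)+2\mathcal{M}(g,Q).\nonumber
\end{eqnarray}
For the first term I would invoke the optimality of $g$: since $g$ minimizes $\sum_{P_l\in\mathbb{P}}\mathcal{M}(P_l,\cdot)$ over all $k$-point sets, we have $\sum_{P_l\in\mathbb{P}}\mathcal{M}(P_l,Q)\geq\Delta$ for every $Q$, hence $\frac{2\mathcal{M}(P_i,g)}{\sum_{P_l}\mathcal{M}(P_l,Q)}\leq\frac{2\mathcal{M}(P_i,g)}{\Delta}$. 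So it remains to show $\frac{2\mathcal{M}(g,Q)}{\sum_{P_l}\mathcal{M}(P_l,Q)}\leq\frac{16}{n}$.

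For that I would do a case analysis on the size of $\mathcal{M}(g,Q)$ relative to $\Delta/n$. In the ``small'' case, say $\mathcal{M}(g,Q)\leq \frac{c\Delta}{n}$ for a suitable absolute constant $c$, I would again use $\sum_{P_l}\mathcal{M}(P_l,Q)\geq\Delta$ to get $\frac{2\mathcal{M}(g,Q)}{\sum_{P_l}\mathcal{M}(P_l,Q)}\leq\frac{2c}{n}$. In the ``large'' case I would obtain a lower bound on $\sum_{P_l}\mathcal{M}(P_l,Q)$ in terms of $\mathcal{M}(g,Q)$: applying Lemma~\ref{lem-tri} with $A=g$, $B=Q$, $C=P_l$ gives $\mathcal{M}(g,Q)\leq 2\mathcal{M}(g,P_l)+2\mathcal{M}(P_l,Q)$, i.e. $\mathcal{M}(P_l,Q)\geq\frac12\mathcal{M}(g,Q)-\mathcal{M}(g,P_l)$; summing over $l$ yields $\sum_{P_l}\mathcal{M}(P_l,Q)\geq\frac{n}{2}\mathcal{M}(g,Q)-\Delta$, which in the large case is $\Omega(n)\cdot\mathcal{M}(g,Q)$, so the ratio is $O(1/n)$. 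Choosing the threshold $c$ and tracking the factor-$2$ losses from Lemma~\ref{lem-tri} so that both cases give a bound of at most $\frac{16}{n}$ finishes the argument.

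The main obstacle is not conceptual but bookkeeping: the two applications of Lemma~\ref{lem-tri} each cost a factor of $2$, and one must pick the threshold in the case split so that the ``small'' bound $\frac{2c}{n}$ and the ``large'' bound coming from $\frac{n}{2}\mathcal{M}(g,Q)-\Delta\geq\text{(const)}\,n\,\mathcal{M}(g,Q)$ both land under $\frac{16}{n}$ simultaneously; the crucial structural input that makes the whole thing work is the optimality inequality $\sum_{P_l\in\mathbb{P}}\mathcal{M}(P_l,Q)\geq\Delta$, which controls the denominator precisely when $Q$ is close to $g$ (where the triangle-type lower bound degenerates).
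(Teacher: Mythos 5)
Your proof is correct and follows essentially the same route as the paper: split $\mathcal{M}(P_i,Q)$ via Lemma~\ref{lem-tri} through $g(\mathbb{P})$, use optimality of $g(\mathbb{P})$ to lower-bound the denominator by $\Delta$, and case-split on whether $\mathcal{M}(g(\mathbb{P}),Q)$ is small or large relative to $\Delta/n$, applying Lemma~\ref{lem-tri} again in the large case to lower-bound the denominator linearly in $\mathcal{M}(g(\mathbb{P}),Q)$. Your version is in fact slightly cleaner: by summing $\mathcal{M}(P_l,Q)\geq\frac12\mathcal{M}(g(\mathbb{P}),Q)-\mathcal{M}(g(\mathbb{P}),P_l)$ over all $l$ you get $\sum_l\mathcal{M}(P_l,Q)\geq\frac{n}{2}\mathcal{M}(g(\mathbb{P}),Q)-\Delta$ directly, which is a sharper bound than the paper's $\frac{n}{4}\mathcal{M}(g(\mathbb{P}),Q)-\Delta$ obtained via the Markov-inequality subset $\mathbb{P}'$, and bounding the two numerator terms separately lets you avoid the monotonicity calculation hidden in the paper's ``simple calculation'' remark.
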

\begin{proof}
First, we consider $\frac{\mathcal{M}(P_i, Q)}{\sum_{P_l\in\mathbb{P}}\mathcal{M}(P_l, Q)}$ with a fixed $Q$ in (\ref{for-sen1}). Through Lemma~\ref{lem-tri}, we know that the numerator $\mathcal{M}(P_i, Q)$ is bounded by $2\mathcal{M}(P_i,g(\mathbb{P}))+2\mathcal{M}(g(\mathbb{P}),Q)$. Then, we consider two cases: \textbf{(1)} $\mathcal{M}(g(\mathbb{P}),Q)\leq \frac{8}{n}\Delta$ and \textbf{(2)} $\mathcal{M}(g(\mathbb{P}),Q)> \frac{8}{n}\Delta$.

Since $\Delta\leq \sum_{P_l\in\mathbb{P}}\mathcal{M}(P_l, Q)$, we directly have 
\begin{eqnarray}
\frac{\mathcal{M}(P_i, Q)}{\sum_{P_l\in\mathbb{P}}\mathcal{M}(P_l, Q)}&\leq& \frac{2\mathcal{M}(P_i,g(\mathbb{P}))+2\mathcal{M}(g(\mathbb{P}),Q)}{\sum_{P_l\in\mathbb{P}}\mathcal{M}(P_l, Q)}\nonumber\\
&\leq& \frac{2\mathcal{M}(P_i,g(\mathbb{P}))+\frac{16}{n}\Delta}{\Delta}=\frac{2\mathcal{M}(P_i, g(\mathbb{P}))}{\Delta}+\frac{16}{n}
\end{eqnarray}
for case (1).

Now, we assume that case (2) is true. Denote by $\mathbb{P}'$ the set $\{P_l\in \mathbb{P}\mid \mathcal{M}(P_l, g(\mathbb{P}))\leq \frac{2}{n}\Delta\}$, and Markov inequality implies $|\mathbb{P}'|\geq \frac{n}{2}$. Applying Lemma~\ref{lem-tri} again, we have
\begin{eqnarray}
\sum_{P_l\in\mathbb{P}}\mathcal{M}(P_l, Q)&\geq& \sum_{P_l\in\mathbb{P}'}\mathcal{M}(P_l, Q)\geq \sum_{P_l\in\mathbb{P}'}\Big(\frac{1}{2}\mathcal{M}(g(\mathbb{P}),Q)-\mathcal{M}(g(\mathbb{P}),P_l)\Big)\nonumber\\
&\geq& \sum_{P_l\in\mathbb{P}'}\Big(\frac{1}{2}\mathcal{M}(g(\mathbb{P}),Q)-\frac{2}{n}\Delta\Big)\geq\frac{n}{2}\Big(\frac{1}{2}\mathcal{M}(g(\mathbb{P}),Q)-\frac{2}{n}\Delta\Big)\nonumber\\
&=&\frac{n}{4}\mathcal{M}(g(\mathbb{P}),Q)-\Delta.\label{for-sen2}
\end{eqnarray}
As a consequence,
\begin{eqnarray}
\frac{\mathcal{M}(P_i, Q)}{\sum_{P_l\in\mathbb{P}}\mathcal{M}(P_l, Q)}\leq \frac{2\mathcal{M}(P_i,g(\mathbb{P}))+2\mathcal{M}(g(\mathbb{P}),Q)}{\frac{n}{4}\mathcal{M}(g(\mathbb{P}),Q)-\Delta}. \label{for-sen3}
\end{eqnarray}
Since both $\mathcal{M}(P_i,g(\mathbb{P}))$ and $\Delta$ are independent of $Q$, the right-hand side of (\ref{for-sen3}) can be viewed as a function on $\mathcal{M}(g(\mathbb{P}),Q)$. Through a simple calculation and the assumption of case (2) (i.e., $\mathcal{M}(g(\mathbb{P}),Q)> \frac{8}{n}\Delta$), we know that it is always less than $\frac{2\mathcal{M}(P_i, g(\mathbb{P}))}{\Delta}+\frac{16}{n}$.

Overall, we have $\sigma_{\mathbb{P}}(P_i)\leq \frac{2\mathcal{M}(P_i, g(\mathbb{P}))}{\Delta}+\frac{16}{n}$ for both cases.
\end{proof}

However, only Lemma~\ref{lem-issue1} is not enough to compute the upper bound for $\sigma_{\mathbb{P}}(P_i)$, because neither $\mathcal{M}(P_i, g(\mathbb{P}))$ nor $\Delta$ is known. Therefore, we need to compute an approximation to replace the upper bound given by Lemma~\ref{lem-issue1}. 


\begin{lemma}
\label{lem-issue12}
Suppose $P_{i_0}$ is randomly picked from $\mathbb{P}$, and let $\tilde{\Delta}=\sum_{P_l\in\mathbb{P}}\mathcal{M}(P_l, P_{i_0})$ and $\alpha>1$. Then with probability $1-\frac{1}{\alpha}$, for all $1\leq i\leq n$, $\sigma_{\mathbb{P}}(P_i)\leq 8(\alpha+1)\frac{\mathcal{M}(P_i, P_{i_0})}{\tilde{\Delta}}+\frac{4\alpha+16}{n}$.
\end{lemma}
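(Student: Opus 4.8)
The plan is to combine Lemma~\ref{lem-issue1}, which bounds $\sigma_{\mathbb{P}}(P_i)$ in terms of the unknown quantities $\mathcal{M}(P_i, g(\mathbb{P}))$ and $\Delta$, with Theorem~\ref{the-constant}, which says that a randomly picked $P_{i_0}$ is a good surrogate for $g(\mathbb{P})$ with probability $1-\frac{1}{\alpha}$. So first I would invoke Theorem~\ref{the-constant}: with probability at least $1-\frac{1}{\alpha}$, the event $\mathcal{M}(P_{i_0}, g(\mathbb{P}))\leq \frac{\alpha}{n}\Delta$ holds, and moreover $P_{i_0}$ yields a $(2\alpha+2)$-approximation, i.e. $\tilde{\Delta}=\sum_{P_l}\mathcal{M}(P_l, P_{i_0})\leq (2\alpha+2)\Delta$. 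Condition on this event for the remainder of the argument.

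Next I would control $\Delta$ from below by $\tilde{\Delta}$: since $\tilde{\Delta}\leq(2\alpha+2)\Delta$, we get $\frac{1}{\Delta}\leq\frac{2\alpha+2}{\tilde{\Delta}}$, which handles the additive $\frac{16}{n}$ term in Lemma~\ref{lem-issue1} only if we also relate $\frac{1}{n}$ to things — actually the $\frac{16}{n}$ term is already in the desired form, so it just contributes a constant. The real work is the term $\frac{2\mathcal{M}(P_i, g(\mathbb{P}))}{\Delta}$. I would bound the numerator via Lemma~\ref{lem-tri}: $\mathcal{M}(P_i, g(\mathbb{P}))\leq 2\mathcal{M}(P_i, P_{i_0})+2\mathcal{M}(P_{i_0}, g(\mathbb{P}))\leq 2\mathcal{M}(P_i, P_{i_0})+\frac{2\alpha}{n}\Delta$, using the conditioned event. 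Substituting into Lemma~\ref{lem-issue1}:
\begin{eqnarray}
\sigma_{\mathbb{P}}(P_i)\leq \frac{4\mathcal{M}(P_i,P_{i_0})+\frac{4\alpha}{n}\Delta}{\Delta}+\frac{16}{n}=\frac{4\mathcal{M}(P_i,P_{i_0})}{\Delta}+\frac{4\alpha+16}{n}.\nonumber
\end{eqnarray}
Then replace $\frac{1}{\Delta}$ by $\frac{2\alpha+2}{\tilde{\Delta}}$ in the first term to obtain $\sigma_{\mathbb{P}}(P_i)\leq \frac{4(2\alpha+2)\mathcal{M}(P_i,P_{i_0})}{\tilde{\Delta}}+\frac{4\alpha+16}{n}=8(\alpha+1)\frac{\mathcal{M}(P_i,P_{i_0})}{\tilde{\Delta}}+\frac{4\alpha+16}{n}$, which is exactly the claimed bound, and it holds simultaneously for all $i$ since the conditioned event does not depend on $i$.

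The main subtlety — not really an obstacle, but the one place to be careful — is that Theorem~\ref{the-constant} as stated gives two conclusions on the same event (the bound on $\mathcal{M}(P_{i_0},g(\mathbb{P}))$ and the $(2\alpha+2)$-approximation guarantee), and I need both of them on the same draw of $P_{i_0}$; since the approximation guarantee is literally derived from the first bound in that theorem's proof, they do hold together on the single event of probability $1-\frac{1}{\alpha}$, so no union bound or loss of probability is incurred. One should double-check the constants propagate exactly to $8(\alpha+1)$ and $4\alpha+16$ as above; the computation is routine once the two applications of Lemma~\ref{lem-tri} (one inside Lemma~\ref{lem-issue1}, one here) and the substitution $\tilde\Delta\le(2\alpha+2)\Delta$ are lined up.
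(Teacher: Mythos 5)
Your proposal is correct and follows essentially the same route as the paper's proof: condition on the single event of Theorem~\ref{the-constant} (giving both $\mathcal{M}(P_{i_0},g(\mathbb{P}))\leq\frac{\alpha}{n}\Delta$ and $\tilde{\Delta}\leq 2(\alpha+1)\Delta$), apply Lemma~\ref{lem-tri} to replace $\mathcal{M}(P_i,g(\mathbb{P}))$ by $2\mathcal{M}(P_i,P_{i_0})+2\mathcal{M}(P_{i_0},g(\mathbb{P}))$ inside the bound of Lemma~\ref{lem-issue1}, and substitute $\frac{1}{\Delta}\leq\frac{2(\alpha+1)}{\tilde{\Delta}}$; the constants propagate exactly as in the paper.
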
 
\begin{proof}
According to Theorem~\ref{the-constant}, we know that $\mathcal{M}(P_{i_0}, g(\mathbb{P}))\leq \frac{\alpha}{n}\Delta$ and $\tilde{\Delta}\leq 2(\alpha+1)\Delta$ with probability at least $1-\frac{1}{\alpha}$. Then we have 
\begin{eqnarray}
\sigma_{\mathbb{P}}(P_i)&\leq& \frac{2\mathcal{M}(P_i, g(\mathbb{P}))}{\Delta}+\frac{16}{n}\leq\frac{4\mathcal{M}(P_i, P_{i_0})+4\mathcal{M}(P_{i_0}, g(\mathbb{P}))}{\Delta}+\frac{16}{n}\nonumber\\
&\leq& \frac{4\mathcal{M}(P_i, P_{i_0})}{\frac{1}{2(\alpha+1)}\tilde{\Delta}}+\frac{4\mathcal{M}(P_{i_0}, g(\mathbb{P}))}{\Delta}+\frac{16}{n}\leq 8(\alpha+1)\frac{\mathcal{M}(P_i, P_{i_0})}{\tilde{\Delta}}+\frac{4\alpha+16}{n},
\end{eqnarray}
where the first inequality comes from Lemma~\ref{lem-issue1}. So the proof is completed.
\end{proof}

Lemma~\ref{lem-issue12} indicates that once $P_{i_0}$ is selected, we can obtain an upper bound for each $\sigma_{\mathbb{P}}(P_i)$ by computing the values $\mathcal{M}(P_i, P_{i_0})$ and $\tilde{\Delta}$.

\subsection{Solving Issue \Rmnum{2}}
\label{sec-issue2}

Let $t_{\mathbb{P}}(P_i)$ and $T$ denote the obtained upper bound of $\sigma_{\mathbb{P}}(P_i)$ from Lemma~\ref{lem-issue12} and their sum, respectively. It is easy to know that $T=\sum_{P_i\in\mathbb{P}}t_{\mathbb{P}}(P_i)\leq 8(\alpha+1)+4\alpha+16$ which is constant if $\alpha$ is constant. For the sake of simplicity, we always assume $T=O(1)$ in our analysis below. 
%
%

We have the following theorem from~\cite{langberg2010universal,DBLP:conf/soda/VaradarajanX12} (we slightly modify their statements to fit our problem better).

\begin{theorem}[\cite{langberg2010universal,DBLP:conf/soda/VaradarajanX12}]
\label{the-lang}
Let $Q$ be any fixed $k$-point set in $\mathbb{R}^d$. \textbf{\rmnum{1}.} If we take a sample $P_i$ from $\mathbb{P}$ according to the distribution $\frac{t_{\mathbb{P}}(P_i)}{T}$, the expectation of $\frac{T}{t_{\mathbb{P}}(P_i)}\mathcal{M}(P_i, Q)$ is $\sum_{P_l\in\mathbb{P}}\mathcal{M}(P_l, Q)$. \textbf{\rmnum{2}.} If we take a sample $\mathbb{S}$ of size of $r$ from $\mathbb{P}$ according to the same distribution, and let $\epsilon>0$,
\begin{eqnarray}
Pr\bigg[\big|\sum_{P_l\in\mathbb{P}}\mathcal{M}(P_l, Q)-\frac{1}{r}\sum_{P_l\in\mathbb{S}}\frac{T}{t_{\mathbb{P}}(P_l)}\mathcal{M}(P_l, Q)\big|\leq \epsilon\sum_{P_l\in\mathbb{P}}\mathcal{M}(P_l, Q)\bigg]\geq 1-2 e^{-\frac{2r\epsilon^2}{T^2}}.\label{for-lang1}
\end{eqnarray}
\end{theorem}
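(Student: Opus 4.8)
The plan is to treat this as a standard sensitivity/importance-sampling argument: Part~\textbf{\rmnum{1}} is just unbiasedness of the estimator, and Part~\textbf{\rmnum{2}} is a Hoeffding bound whose only nontrivial ingredient is that the single-sample estimator is \emph{bounded} uniformly in~$Q$ — and this is exactly what the sensitivity upper bounds $t_{\mathbb{P}}(P_i)$ buy us.

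For Part~\textbf{\rmnum{1}}, fix $Q$ and let $P_i$ be drawn with probability $t_{\mathbb{P}}(P_i)/T$ (note $t_{\mathbb{P}}(P_i)>0$ always, since the upper bound from Lemma~\ref{lem-issue12} contains the strictly positive term $\frac{4\alpha+16}{n}$, so $\frac{T}{t_{\mathbb{P}}(P_i)}$ is well defined). Then a one-line computation,
\[
\mathbb{E}\Big[\tfrac{T}{t_{\mathbb{P}}(P_i)}\mathcal{M}(P_i,Q)\Big]=\sum_{P_l\in\mathbb{P}}\frac{t_{\mathbb{P}}(P_l)}{T}\cdot\frac{T}{t_{\mathbb{P}}(P_l)}\mathcal{M}(P_l,Q)=\sum_{P_l\in\mathbb{P}}\mathcal{M}(P_l,Q),
\]
gives the claim.

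For Part~\textbf{\rmnum{2}}, I would first record the key boundedness fact. By the definition of sensitivity~(\ref{for-sen1}) together with the fact that $t_{\mathbb{P}}(P_i)$ dominates $\sigma_{\mathbb{P}}(P_i)$ (Lemma~\ref{lem-issue12}), for \emph{every} $k$-point set $Q$ we have $\mathcal{M}(P_i,Q)\leq t_{\mathbb{P}}(P_i)\sum_{P_l\in\mathbb{P}}\mathcal{M}(P_l,Q)$; hence, writing $S=\sum_{P_l\in\mathbb{P}}\mathcal{M}(P_l,Q)$, the single-draw estimator obeys $0\leq \frac{T}{t_{\mathbb{P}}(P_i)}\mathcal{M}(P_i,Q)\leq T\cdot S$. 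Now $\mathbb{S}=\{P_{l_1},\dots,P_{l_r}\}$ is an i.i.d.\ sample from the same distribution, so the variables $X_m:=\frac{T}{t_{\mathbb{P}}(P_{l_m})}\mathcal{M}(P_{l_m},Q)$ are i.i.d., each supported on $[0,TS]$ with mean $S$ by Part~\textbf{\rmnum{1}}. Applying Hoeffding's inequality to $\frac1r\sum_{m=1}^r X_m$,
\[
Pr\Big[\big|\tfrac1r\textstyle\sum_{m=1}^r X_m-S\big|\geq \epsilon S\Big]\leq 2\exp\!\Big(-\frac{2r(\epsilon S)^2}{(TS)^2}\Big)=2\exp\!\Big(-\frac{2r\epsilon^2}{T^2}\Big),
\]
which is precisely~(\ref{for-lang1}) once $S$ is expanded back and the empirical sum is written over $\mathbb{S}$.

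The computation is essentially mechanical; the one place that genuinely uses the structure of our problem — and hence the main obstacle — is justifying that the estimator is bounded \emph{uniformly over all $Q$}, i.e.\ that $t_{\mathbb{P}}$ is a legitimate pointwise upper bound on the sensitivity. This is exactly what Section~\ref{sec-issue1} (Lemmas~\ref{lem-issue1} and~\ref{lem-issue12}) establishes; without it the range of $X_m$ could be unbounded and Hoeffding would be vacuous. I would also emphasize that the statement is for a single fixed $Q$: upgrading it to a guarantee holding simultaneously for all $k$-point sets $Q$ (the core-set property~(\ref{for-coreset})) requires an additional union-bound / VC-dimension argument over the induced range space, which is handled later in Section~\ref{sec-issue2} and is not part of this theorem.
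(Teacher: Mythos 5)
Your proposal is correct and follows exactly the argument the paper itself gives (and attributes to the cited works): unbiasedness of the importance-sampling estimator, plus Hoeffding's inequality using the fact that each $\frac{T}{t_{\mathbb{P}}(P_l)}\mathcal{M}(P_l,Q)$ lies in $\big[0,\,T\sum_{P_l\in\mathbb{P}}\mathcal{M}(P_l,Q)\big]$ because $t_{\mathbb{P}}(P_l)$ upper-bounds the sensitivity. Your closing remarks about the bound being for a fixed $Q$ and the union bound being deferred to Section~\ref{sec-issue2} also match the paper's treatment.
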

In particular, (\ref{for-lang1}) is an application of Hoeffding's inequality because each $\frac{T}{t_{\mathbb{P}}(P_i)}\mathcal{M}(P_i, Q)$ is a random variable between $0$ and $T\sum_{P_l\in\mathbb{P}}\mathcal{M}(P_l, Q)$ (see Lemma 2.2 of~\cite{DBLP:conf/soda/VaradarajanX12} for more details). Moreover, (\ref{for-lang1}) shows that the sample $\mathbb{S}$ together with the weight $w_l=\frac{1}{r}\frac{T}{t_{\mathbb{P}}(P_l)}$ for each $P_l\in\mathbb{S}$ will form a core-set of $\mathbb{P}$ with respect to the fixed $Q$ (see (\ref{for-coreset})).
%
%
But  (\ref{for-coreset}) should hold for an infinity number of possible candidates for the geometric prototype, rather than one single $Q$, in the space. Hence, we need to determine an appropriate sample size (i.e., issue \textbf{(\Rmnum{2})}). 

Our basic idea is to discretize the space and generate a finite number of representations for them; then we can take a union bound for the final success probability through (\ref{for-lang1}). Note \cite{DBLP:conf/soda/VaradarajanX12} also used discretization to determine the sample size for projective clustering integer points; but our idea and analysis are quite different due to the different natures of the problems. Also, \cite{langberg2010universal,DBLP:conf/stoc/FeldmanL11} defined the ``dimension'' of the clustering problems so as to bounding their sample sizes. Here, we avoid using their approach due to two reasons: first, it will be very complicated to define and compute the dimension of geometric prototype problem; second, the framework in~\cite{DBLP:conf/stoc/FeldmanL11} would result in a more complicated sampling process and even may cause negative weights, however, we prefer to keep our sampling process simple as described in Theorem~\ref{the-lang} (especially when using any available algorithm or heuristic as a black box on the core-set). 
We elaborate on our analysis below.

Following Theorem~\ref{the-constant}, we assume that a randomly picked $P_{i_0}$ yields a $(2\alpha+2)$-approximation, and denote by $L$ the resulting cost $\sum_{P_l\in\mathbb{P}}\mathcal{M}(P_l, P_{i_0})$. The following lemma reveals that we just need to consider the $k$-point sets which are not too far from $P_{i_0}$.

\begin{lemma}
\label{lem-issue2-n1}
For any $k$-point set $Q$ with $\mathcal{M}(Q, P_{i_0})>\frac{4L}{n}$, the resulting cost $\sum_{P_l\in\mathbb{P}}\mathcal{M}(P_l, Q)$ is always higher than $\sum_{P_l\in\mathbb{P}}\mathcal{M}(P_l, P_{i_0})$.
\end{lemma}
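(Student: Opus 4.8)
\textbf{Proof proposal for Lemma~\ref{lem-issue2-n1}.}

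The plan is to show that any candidate $Q$ sitting far from $P_{i_0}$ — specifically with $\mathcal{M}(Q,P_{i_0})>\frac{4L}{n}$ — is so expensive that it cannot possibly beat $P_{i_0}$ itself. The natural engine here is Lemma~\ref{lem-tri}, which, rearranged, says $\mathcal{M}(P_l,Q)\geq\frac{1}{2}\mathcal{M}(P_{i_0},Q)-\mathcal{M}(P_{i_0},P_l)$ for each $P_l\in\mathbb{P}$. Summing this over all $l$ gives
\[
\sum_{P_l\in\mathbb{P}}\mathcal{M}(P_l,Q)\;\geq\;\frac{n}{2}\mathcal{M}(P_{i_0},Q)-\sum_{P_l\in\mathbb{P}}\mathcal{M}(P_{i_0},P_l)\;=\;\frac{n}{2}\mathcal{M}(P_{i_0},Q)-L,
\]
since $L=\sum_{P_l\in\mathbb{P}}\mathcal{M}(P_l,P_{i_0})$ by definition. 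This mirrors the argument already used in the proof of Lemma~\ref{lem-issue1} (inequality~(\ref{for-sen2})), just without restricting to the subset $\mathbb{P}'$.

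Then I would simply plug in the hypothesis $\mathcal{M}(Q,P_{i_0})>\frac{4L}{n}$: this forces $\frac{n}{2}\mathcal{M}(P_{i_0},Q)>2L$, hence
\[
\sum_{P_l\in\mathbb{P}}\mathcal{M}(P_l,Q)\;>\;2L-L\;=\;L\;=\;\sum_{P_l\in\mathbb{P}}\mathcal{M}(P_l,P_{i_0}),
\]
which is exactly the claim. So $Q$ is strictly worse than $P_{i_0}$ as a prototype, and such far-away $Q$ can be ignored when searching for (approximate) geometric prototypes.

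There is no real obstacle here — the lemma is a one-line consequence of the triangle-type inequality in Lemma~\ref{lem-tri} once it is summed over $\mathbb{P}$. The only point to be slightly careful about is the direction of the rearrangement: Lemma~\ref{lem-tri} as stated reads $\mathcal{M}(A,B)\leq 2\mathcal{M}(A,C)+2\mathcal{M}(C,B)$, so to get a lower bound on $\mathcal{M}(P_l,Q)$ one applies it with $A=P_{i_0}$, $B=Q$, $C=P_l$, obtaining $\mathcal{M}(P_{i_0},Q)\leq 2\mathcal{M}(P_{i_0},P_l)+2\mathcal{M}(P_l,Q)$ and then dividing by $2$. Everything else is arithmetic, and the constant $4$ in the hypothesis is precisely what is needed to absorb the $-L$ term and still leave a strict surplus.
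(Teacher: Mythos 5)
Your proof is correct and is essentially identical to the paper's: both sum the rearranged Lemma~\ref{lem-tri} bound $\mathcal{M}(P_l,Q)\geq\frac{1}{2}\mathcal{M}(Q,P_{i_0})-\mathcal{M}(P_l,P_{i_0})$ over $\mathbb{P}$ and plug in the hypothesis $\mathcal{M}(Q,P_{i_0})>\frac{4L}{n}$ to conclude the total cost exceeds $2L-L=L$.
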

\begin{proof}
Using Lemma~\ref{lem-tri}, we have
\begin{eqnarray}
\sum_{P_l\in\mathbb{P}}\mathcal{M}(P_l, Q)\geq\sum_{P_l\in\mathbb{P}}(\frac{1}{2}\mathcal{M}(Q, P_{i_0})-\mathcal{M}(P_l, P_{i_0}))>\frac{1}{2}4L-L=\sum_{P_l\in\mathbb{P}}\mathcal{M}(P_l, P_{i_0}).
\end{eqnarray}
So the proof is completed.
\end{proof}

Because we already have the initial solution $P_{i_0}$, we are only interested in the solutions having lower costs. Thus, we focus on the $k$-point set $Q$s with $\mathcal{M}(Q, P_{i_0})\leq\frac{4L}{n}$ based on Lemma~\ref{lem-issue2-n1}. 
Let $Q=\{q_1, q_2, \cdots, q_k\}\subset\mathbb{R}^d$ and $R=L/n$. 
W.l.o.g, we assume the induced permutation of $\mathcal{M}(Q, P_{i_0})$ in Definition~\ref{def-match} is $\pi(j)=j$ for $1\leq j\leq k$. The constraint $\mathcal{M}(Q, P_{i_0})\leq\frac{4L}{n}$ directly implies that $||q_j-p^{i_0}_j||\leq 2\sqrt{R}$ for each $1\leq j\leq k$. We use $B(x,\rho)$ to denote the ball centered at the point $x$ with the radius $\rho$. Then we draw $k$ balls $B(p^{i_0}_j,2\sqrt{R})$ for each $1\leq j\leq k$; inside each ball, we build a uniform grid $G_j$ with the grid side length $\epsilon\sqrt{\frac{R}{kd}}$. Let $\Gamma$ be the Cartesian product $G_1\times G_2\times\cdots\times G_k$. It is easy to know that $\Gamma$ contains $O\Big((\frac{4\sqrt{kd}}{\epsilon})^{kd}\Big)$ $k$-point sets in total. Therefore, we can apply (\ref{for-lang1}) of Theorem~\ref{the-lang} to obtain a union bound over all the $k$-point sets of $\Gamma$ (recall $T=O(1)$). 

%
%
%
\begin{lemma}
\label{lem-issue2-grid}
If the sample $\mathbb{S}$ in Theorem~\ref{the-lang} has the size of $O(\frac{kd}{\epsilon^2}\log\frac{kd}{\epsilon})$, and each $P_l\in\mathbb{S}$ has the weight $w_l=\frac{1}{r}\frac{T}{t_{\mathbb{P}}(P_l)}$, then with constant probability the inequality (\ref{for-coreset}) holds for each $Q\in\Gamma$.
\end{lemma}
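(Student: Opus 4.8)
The plan is to combine the single-point-set concentration bound from Theorem~\ref{the-lang} with a union bound over the finite grid $\Gamma$, and then choose the sample size $r$ so that the total failure probability is a small constant. First I would note that $\Gamma$ contains $N:=O\big((4\sqrt{kd}/\epsilon)^{kd}\big)$ candidate $k$-point sets, so $\log N = O(kd\log(kd/\epsilon))$. For each fixed $Q\in\Gamma$, part~\textbf{\rmnum{2}} of Theorem~\ref{the-lang} tells us that with the weights $w_l=\frac{1}{r}\frac{T}{t_{\mathbb{P}}(P_l)}$ the estimate fails to be within an $\epsilon$-relative error of $\sum_{P_l\in\mathbb{P}}\mathcal{M}(P_l,Q)$ with probability at most $2e^{-2r\epsilon^2/T^2}$. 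Taking a union bound over all $N$ sets of $\Gamma$, the probability that \emph{some} $Q\in\Gamma$ violates (\ref{for-coreset}) is at most $2N e^{-2r\epsilon^2/T^2}$. Since $T=O(1)$, setting $r=\Theta\big(\frac{1}{\epsilon^2}\log N\big)=\Theta\big(\frac{kd}{\epsilon^2}\log\frac{kd}{\epsilon}\big)$ makes this bound an arbitrarily small constant (e.g.\ at most $1/3$), which is exactly the stated sample size.

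The second ingredient is to argue that it suffices to control the error on the grid $\Gamma$ rather than on all of $\mathbb{R}^d$. By Lemma~\ref{lem-issue2-n1}, any $Q$ with $\mathcal{M}(Q,P_{i_0})>4L/n$ already has cost exceeding that of the initial solution $P_{i_0}$, so such $Q$'s are irrelevant for the purpose of finding a good prototype — I would state this reduction explicitly and restrict attention to $Q$ with $\mathcal{M}(Q,P_{i_0})\le 4L/n=4R$. For such a $Q$, after fixing the permutation realizing $\mathcal{M}(Q,P_{i_0})$ to be the identity, each $q_j$ lies in $B(p^{i_0}_j,2\sqrt{R})$, hence in the domain covered by the grid $G_j$; rounding each $q_j$ to its nearest grid point $\hat q_j\in G_j$ yields $\hat Q\in\Gamma$ with $\|q_j-\hat q_j\|\le \frac{\sqrt d}{2}\cdot\epsilon\sqrt{R/(kd)}=\frac{\epsilon}{2}\sqrt{R/k}$, so $\mathcal{M}(Q,\hat Q)\le\sum_{j=1}^k\|q_j-\hat q_j\|^2\le \frac{\epsilon^2 R}{4}=O(\epsilon^2 R)$.

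Then I would transfer the core-set guarantee from $\hat Q$ to $Q$ using Lemma~\ref{lem-tri2}. Applying (\ref{for-tri2}) (with a suitable constant in place of $\epsilon$ there, or with the same $\epsilon$) to the triples $(P_l,Q,\hat Q)$ for each $P_l\in\mathbb{P}$, and separately for each $P_l\in\mathbb{S}$ inside the weighted sum, we get $\big|\mathcal{M}(P_l,Q)-\mathcal{M}(P_l,\hat Q)\big|\le (1+\frac1\epsilon)\mathcal{M}(Q,\hat Q)+\epsilon\,\mathcal{M}(P_l,Q)$. Summing over $\mathbb{P}$ (respectively over $\mathbb{S}$ with weights summing to $O(1)$), the first term contributes $O\big((1+\frac1\epsilon)\cdot \epsilon^2 R\cdot n\big)=O(\epsilon R n)=O(\epsilon L)$, which is $O(\epsilon)$ times the optimum $\Delta$ since $L\le(2\alpha+2)\Delta$; the second term contributes $O(\epsilon)\sum_{P_l}\mathcal{M}(P_l,Q)$. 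Combining this with the grid guarantee for $\hat Q$ via the triangle inequality of absolute values gives (\ref{for-coreset}) for arbitrary relevant $Q$, after rescaling $\epsilon$ by a constant.

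The main obstacle I expect is the bookkeeping in this last transfer step: one must apply Lemma~\ref{lem-tri2} on the \emph{weighted} empirical side as well as the true side, and verify that the accumulated additive error $O((1+\frac1\epsilon)\,\mathcal{M}(Q,\hat Q))$, summed over $n$ patterns (or over the weighted sample, whose weights sum to roughly $1$ in expectation — here one should be careful, since the weights $w_l$ can individually be large, so it is cleaner to bound $\sum_{P_l\in\mathbb{S}} w_l = \frac1r\sum_{P_l\in\mathbb{S}}\frac{T}{t_{\mathbb{P}}(P_l)}$, whose expectation is $T=O(1)$ and which concentrates, or simply absorb it into the already-established relative error on $\hat Q$), stays $O(\epsilon)\sum_{P_l\in\mathbb{P}}\mathcal{M}(P_l,Q)$. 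The key quantitative point that makes everything fit is that the grid side length was chosen precisely so that $\mathcal{M}(Q,\hat Q)=O(\epsilon^2 R)$ while $\sum_{P_l\in\mathbb{P}}\mathcal{M}(P_l,Q)=\Omega(\Delta)=\Omega(L/(2\alpha+2))=\Omega(Rn)$ for the $Q$'s of interest, so the ratio is $O(\epsilon^2/n)\cdot n=O(\epsilon^2)\ll\epsilon$ even after multiplying by the $(1+\frac1\epsilon)$ factor.
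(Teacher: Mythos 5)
Your first paragraph is exactly the paper's proof of this lemma: apply the Hoeffding-type bound of Theorem~\ref{the-lang} to each fixed $Q\in\Gamma$, take a union bound over the $O\big((\frac{4\sqrt{kd}}{\epsilon})^{kd}\big)$ grid candidates, and choose $r=\Theta(\frac{kd}{\epsilon^2}\log\frac{kd}{\epsilon})$ (using $T=O(1)$) so the total failure probability is a small constant — so the proposal is correct and takes essentially the same approach. Your remaining two paragraphs (restricting to $Q$ with $\mathcal{M}(Q,P_{i_0})\leq\frac{4L}{n}$ and transferring the guarantee from $\mathcal{N}(Q)$ to $Q$ via Lemma~\ref{lem-tri2}) are not needed for this lemma itself, but they correctly anticipate the paper's subsequent argument leading to Theorem~\ref{the-coreset}, where the weighted-sample side is handled with Lemma~\ref{lem-lowsens}.
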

\vspace{-0.1in}

Next we consider the $k$-point set $Q=\{q_1, q_2, \cdots, q_k\}\notin\Gamma$. Again, w.l.o.g, we assume the induced permutation of $\mathcal{M}(P_{i_0}, Q)$ is $\pi(j)=j$ for $1\leq j\leq k$. Also, due to our above assumption, we know that each $q_j$ is covered by the ball $B(p^{i_0}_j,2\sqrt{R})$. To help our analysis, we take its ``nearest neighbor'' from $\Gamma$, $\mathcal{N}(Q)=\{\mathcal{N}(q_1), \mathcal{N}(q_2), \cdots, \mathcal{N}(q_k)\}$ with each $\mathcal{N}(q_j)$ being the nearest grid point of $q_j$ in $G_j$. So we have 
\begin{eqnarray}
||q_j-\mathcal{N}(q_j)||\leq \epsilon\sqrt{\frac{R}{k}} \hspace{0.1in} for \hspace{0.1in} 1\leq j\leq k. \label{for-issue2-6}
\end{eqnarray}

It is easy to see $\Big|\sum_{P_l\in\mathbb{P}}\mathcal{M}(P_l, Q)-\frac{1}{r}\sum_{P_l\in\mathbb{S}}\frac{T}{t_{\mathbb{P}}(P_l)}\mathcal{M}(P_l, Q)\Big|\leq$
\begin{eqnarray}
\Big|\sum_{P_l\in\mathbb{P}}\mathcal{M}(P_l, Q)-\sum_{P_l\in\mathbb{P}}\mathcal{M}(P_l, \mathcal{N}(Q))\Big|+\Big|\sum_{P_l\in\mathbb{P}}\mathcal{M}(P_l, \mathcal{N}(Q))-\frac{1}{r}\sum_{P_l\in\mathbb{S}}\frac{T}{t_{\mathbb{P}}(P_l)}\mathcal{M}(P_l, \mathcal{N}(Q))\Big|\nonumber\\
+\Big|\frac{1}{r}\sum_{P_l\in\mathbb{S}}\frac{T}{t_{\mathbb{P}}(P_l)}\mathcal{M}(P_l, \mathcal{N}(Q))-\frac{1}{r}\sum_{P_l\in\mathbb{S}}\frac{T}{t_{\mathbb{P}}(P_l)}\mathcal{M}(P_l, Q)\Big|\hspace{0.2in}\label{for-largetri}
\end{eqnarray}
where the middle item is bounded by Lemma~\ref{lem-issue2-grid}. So the remaining issue is to prove that the other two items in (\ref{for-largetri}) 
%
%
are small as well. That is, Lemma~\ref{lem-issue2-grid} can be extended from $\mathcal{N}(Q)$ to $Q$.

Note $\Big|\sum_{P_l\in\mathbb{P}}\mathcal{M}(P_l, Q)-\sum_{P_l\in\mathbb{P}}\mathcal{M}(P_l, \mathcal{N}(Q))\Big|\leq \sum_{P_l\in\mathbb{P}}\big|\mathcal{M}(P_l, Q)-\mathcal{M}(P_l, \mathcal{N}(Q))\big|$, so we consider each $\big|\mathcal{M}(P_l, Q)-\mathcal{M}(P_l, \mathcal{N}(Q))\big|$ separately. Using Lemma~\ref{lem-tri2}, we have
\begin{eqnarray}
\big|\mathcal{M}(P_l, Q)-\mathcal{M}(P_l, \mathcal{N}(Q))\big|\leq (1+\frac{1}{\epsilon})\mathcal{M}(Q,\mathcal{N}(Q))+\epsilon\mathcal{M}(P_l, Q).\label{for-issue2-7}
\end{eqnarray}
In addition, we have $\mathcal{M}(Q, \mathcal{N}(Q))\leq k\Big(\epsilon\sqrt{\frac{R}{k}}\Big)^2=\epsilon^2 R$ by (\ref{for-issue2-6}). Therefore, we have 
\begin{eqnarray}
\Big|\sum_{P_l\in\mathbb{P}}\mathcal{M}(P_l, Q)-\sum_{P_l\in\mathbb{P}}\mathcal{M}(P_l, \mathcal{N}(Q))\Big|&\leq& \sum_{P_l\in\mathbb{P}}\big|\mathcal{M}(P_l, Q)-\mathcal{M}(P_l, \mathcal{N}(Q))\big|\nonumber\\
&\leq&(1+\frac{1}{\epsilon})n\mathcal{M}(Q, \mathcal{N}(Q))+\epsilon\sum_{P_l\in\mathbb{P}}\mathcal{M}(P_l, Q)\nonumber\\
&\leq& O(\epsilon)n R+\epsilon\sum_{P_l\in\mathbb{P}}\mathcal{M}(P_l,Q)\nonumber\\
&=&O(\epsilon)\sum_{P_l\in\mathbb{P}}\mathcal{M}(P_l, Q),\label{for-issue2-8}
\end{eqnarray}
where the last equality comes from $nR=L$ which is a constant approximation of the optimal objective value. (\ref{for-issue2-8}) also implies that
\begin{eqnarray}
\big(1-O(\epsilon)\big)\sum_{P_l\in\mathbb{P}}\mathcal{M}(P_l, Q)\leq\sum_{P_l\in\mathbb{P}}\mathcal{M}(P_l, \mathcal{N}(Q))\leq \big(1+O(\epsilon)\big)\sum_{P_l\in\mathbb{P}}\mathcal{M}(P_l, Q).\label{for-issue2-9}
\end{eqnarray}

Next, we consider the last item in (\ref{for-largetri}). 
It is a little more complicated because the coefficient $\frac{T}{t_{\mathbb{P}}(P_l)}$ could be large. We need the following lemma first.
\vspace{-0.1in}
\begin{lemma}
\label{lem-lowsens}
For each $P_l\in\mathbb{P}$, $t_{\mathbb{P}}(P_l)>\frac{1}{4n}$.
\end{lemma}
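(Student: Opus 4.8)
The plan is to lower-bound $t_{\mathbb{P}}(P_l)$ by relating it to the weaker sensitivity bound we are free to use, rather than trying to bound the true sensitivity $\sigma_{\mathbb{P}}(P_l)$ from below (which would fail, since a typical pattern can have negligible sensitivity). Recall from Section~\ref{sec-issue1} that $t_{\mathbb{P}}(P_l)$ is defined to be the upper bound on $\sigma_{\mathbb{P}}(P_l)$ supplied by Lemma~\ref{lem-issue12}, namely $t_{\mathbb{P}}(P_l)=8(\alpha+1)\frac{\mathcal{M}(P_l, P_{i_0})}{\tilde{\Delta}}+\frac{4\alpha+16}{n}$ (conditioned on the good event of that lemma). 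The second term alone already gives $t_{\mathbb{P}}(P_l)\geq\frac{4\alpha+16}{n}$, and since $\alpha>1$ we get $t_{\mathbb{P}}(P_l)>\frac{20}{n}>\frac{1}{4n}$; the first term is non-negative so it only helps. So the statement follows directly from the \emph{form} of the additive $O(1/n)$ term baked into our sensitivity upper bound.

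Concretely, I would carry out the argument in the following order. First, recall the explicit expression for $t_{\mathbb{P}}(P_l)$ from Lemma~\ref{lem-issue12} (or, depending on exactly which $\alpha$-dependent constants the authors carry forward in Section~\ref{sec-issue2}, from whatever additive $c/n$ term appears there). Second, observe that $\frac{\mathcal{M}(P_l, P_{i_0})}{\tilde{\Delta}}\geq 0$ always, so we may drop the first summand. Third, conclude $t_{\mathbb{P}}(P_l)\geq\frac{4\alpha+16}{n}$, and since $\alpha>1$ this is at least $\frac{20}{n}$, which is comfortably larger than $\frac{1}{4n}$. This is the whole proof; it is essentially a bookkeeping lemma isolated so that the bound $\frac{T}{t_{\mathbb{P}}(P_l)}\leq 4nT$ can be quoted cleanly when controlling the third term of~(\ref{for-largetri}).

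There is no real obstacle here — the only thing to be careful about is consistency of constants: one must make sure the additive term in the definition of $t_{\mathbb{P}}(P_l)$ actually used downstream matches the $\frac{4\alpha+16}{n}$ coming out of Lemma~\ref{lem-issue12}, and that the constant $\frac14$ in the statement is loose enough to absorb any minor reshuffling (e.g.\ if $\alpha$ is taken very close to $1$, or if a slightly different constant is propagated, the bound $\frac{1}{4n}$ still holds with large margin). If instead the authors intend $t_{\mathbb{P}}(P_l)$ to be a renormalized quantity, one would simply track that renormalization through; but in all natural readings the additive $\Theta(1/n)$ piece dominates $\frac{1}{4n}$ trivially, so the lemma holds with room to spare.
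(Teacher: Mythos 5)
Your argument is correct, but it takes a genuinely different route from the paper, and the rationale you give for preferring your route rests on a misconception worth flagging. You assert that lower-bounding the true sensitivity $\sigma_{\mathbb{P}}(P_l)$ ``would fail, since a typical pattern can have negligible sensitivity'' --- but the paper does exactly that, and it always works, precisely because sensitivity is a \emph{supremum} over $Q$: every $P_l$ gets to pick its favorite witness. Concretely, the paper sets $Q=P_{l'}$ where $P_{l'}$ maximizes $\mathcal{M}(P_l,P_i)$ over $i$; by Lemma~\ref{lem-tri}, $\mathcal{M}(P_i,Q)\leq 2\mathcal{M}(P_i,P_l)+2\mathcal{M}(P_l,Q)\leq 4\mathcal{M}(P_l,Q)$ for every $i$, so the denominator $\sum_i\mathcal{M}(P_i,Q)\leq\bigl(1+4(n-1)\bigr)\mathcal{M}(P_l,Q)$ and hence $\sigma_{\mathbb{P}}(P_l)\geq\frac{1}{4n-3}>\frac{1}{4n}$, after which $t_{\mathbb{P}}(P_l)\geq\sigma_{\mathbb{P}}(P_l)$ closes the argument. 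Your alternative --- reading the bound directly off the additive term $\frac{4\alpha+16}{n}\geq\frac{20}{n}$ in the explicit formula for $t_{\mathbb{P}}(P_l)$ from Lemma~\ref{lem-issue12} --- is shorter and, as a bonus, is valid unconditionally (the formula's value dominates $\frac{20}{n}$ whether or not the high-probability event of Lemma~\ref{lem-issue12} occurs), whereas the paper's version implicitly conditions on $t_{\mathbb{P}}(P_l)$ actually being an upper bound on $\sigma_{\mathbb{P}}(P_l)$. Conversely, the paper's proof has the virtue of depending only on $t_{\mathbb{P}}(P_l)$ being \emph{some} valid upper bound on the sensitivity, not on the specific formula used, so it would survive unchanged if the sensitivity-estimation step were later swapped out; yours would need to be rechecked against the new formula. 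Both are correct; just drop the mistaken claim that the sensitivity route is unavailable.
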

\begin{proof}
Fix one $P_l\in\mathbb{P}$. We select $P_{l'}$ that has the largest matching cost to $P_l$, i.e., $\mathcal{M}(P_l, P_{l'})=\max_{P_i\in\mathbb{P}}\mathcal{M}(P_l, P_i)$, and set $Q=P_{l'}$. Using Lemma~\ref{lem-tri}, we have $\mathcal{M}(P_i, Q)\leq 2\mathcal{M}(P_i, P_l)+2\mathcal{M}(P_l, Q)\leq 4\mathcal{M}(P_l, Q)$ for any $1\leq i\leq n$. Therefore, based on the fact that $t_{\mathbb{P}}(P_l)$ is the upper bound of $\sigma_{\mathbb{P}}(P_l)$ in (\ref{for-sen1}), we know that it should be at least $\frac{\mathcal{M}(P_l, Q)}{(1+4(n-1))\mathcal{M}(P_l, Q)}>\frac{1}{4n}$.
\end{proof}
\vspace{-0.1in}

Using Lemma~\ref{lem-lowsens} and the same idea for (\ref{for-issue2-8}), we have
\begin{eqnarray}
&&\Big|\frac{1}{r}\sum_{P_l\in\mathbb{S}}\frac{T}{t_{\mathbb{P}}(P_l)}\mathcal{M}(P_l, \mathcal{N}(Q))-\frac{1}{r}\sum_{P_l\in\mathbb{S}}\frac{T}{t_{\mathbb{P}}(P_l)}\mathcal{M}(P_l, Q)\Big|\nonumber\\
&\leq&\frac{1}{r} \sum_{P_l\in\mathbb{S}}\frac{T}{t_{\mathbb{P}}(P_l)}\Big|\mathcal{M}(P_l, \mathcal{N}(Q))-\mathcal{M}(P_l, Q)\Big|\nonumber\\
&\leq &\frac{1}{r}\sum_{P_l\in\mathbb{S}}\frac{T}{t_{\mathbb{P}}(P_l)}\Big((1+\frac{1}{\epsilon})\mathcal{M}(\mathcal{N}(Q),Q)+\epsilon \mathcal{M}(P_l, \mathcal{N}(Q))\Big)\nonumber\\
&\leq&\max_{P_l\in\mathbb{S}}\{\frac{T}{t_{\mathbb{P}}(P_l)}\}\cdot(1+\frac{1}{\epsilon})\mathcal{M}(\mathcal{N}(Q),Q)+\epsilon \frac{1}{r}\sum_{P_l\in\mathbb{S}}\frac{T}{t_{\mathbb{P}}(P_l)}  \mathcal{M}(P_l, \mathcal{N}(Q))\nonumber\\
&\leq&O(\epsilon)nR+\epsilon \frac{1}{r}\sum_{P_l\in\mathbb{S}}\frac{T}{t_{\mathbb{P}}(P_l)}  \mathcal{M}(P_l, \mathcal{N}(Q)),\label{for-issue2-10}
\end{eqnarray}
where the last inequality comes from Lemma~\ref{lem-lowsens} and $T=O(1)$. 
In addition, Lemma~\ref{lem-issue2-grid} guarantees that $\epsilon\frac{1}{r}\sum_{P_l\in\mathbb{S}}\frac{T}{t_{\mathbb{P}}(P_l)}\mathcal{M}(P_l, \mathcal{N}(Q))=O(\epsilon)\sum_{P_l\in \mathbb{P}}\mathcal{M}(P_l, \mathcal{N}(Q))$. 
Applying the triangle inequality (\ref{for-largetri}) with the bounds (\ref{for-issue2-8}), (\ref{for-issue2-9}) and (\ref{for-issue2-10}), we have
\begin{eqnarray}
\Big|\sum_{P_l\in\mathbb{P}}\mathcal{M}(P_l,Q)-\frac{1}{r}\sum_{P_l\in\mathbb{S}}\frac{T}{t_{\mathbb{P}}(P_l)}\mathcal{M}(P_l,Q)\Big|
&\leq& O(\epsilon)\sum_{P_l\in\mathbb{P}}\mathcal{M}(P_l,Q)+O(\epsilon)\sum_{P_l\in\mathbb{P}}\mathcal{M}(P_l, \mathcal{N}(Q))\nonumber\\
&=&O(\epsilon)\sum_{P_l\in\mathbb{P}}\mathcal{M}(P_l, Q).\label{for-issue2-11}
\end{eqnarray}
Consequently, we have the final theorem for core-set.

\begin{theorem}
\label{the-coreset}
Let $P_{i_0}$ be the $k$-point set randomly selected by Theorem~\ref{the-constant}, and $\mathbb{S}$ be the sample from $\mathbb{P}$ according to the distribution $\frac{t_{\mathbb{P}}(P_i)}{T}$. If the sample $\mathbb{S}$ has the size of $r=O(\frac{kd}{\epsilon^2}\log\frac{kd}{\epsilon})$ and  each $P_l\in\mathbb{S}$ has the weight $w_l=\frac{1}{r}\frac{T}{t_{\mathbb{P}}(P_l)}$, then with constant probability the inequality (\ref{for-coreset}) holds for any $k$-point set $Q\subset\mathbb{R}^d$ with $\mathcal{M}(Q, P_{i_0})\leq\frac{4L}{n}$.
\end{theorem}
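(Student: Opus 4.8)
The plan is to assemble the pieces already established in Sections~\ref{sec-issue1} and~\ref{sec-issue2}; almost all the real work has been done, and what remains is to fix the randomness correctly and to glue the estimates together. First I would condition on the good event for $P_{i_0}$: by Theorem~\ref{the-constant} (boosting the success probability by repetition if one wants it arbitrarily close to $1$), the randomly chosen $P_{i_0}$ simultaneously (a) yields a $(2\alpha+2)$-approximation, so that $L=\sum_{P_l\in\mathbb{P}}\mathcal{M}(P_l,P_{i_0})=\Theta(\Delta)$, and (b) makes the bound of Lemma~\ref{lem-issue12} valid, so that the quantities $t_{\mathbb{P}}(P_i)=8(\alpha+1)\mathcal{M}(P_i,P_{i_0})/\tilde\Delta+(4\alpha+16)/n$ are genuine upper bounds on the sensitivities $\sigma_{\mathbb{P}}(P_i)$, with $T=\sum_i t_{\mathbb{P}}(P_i)=O(1)$. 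Both of these hold with constant probability, hence together with constant probability by a union bound.

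Conditioned on that, I would invoke Lemma~\ref{lem-issue2-grid}: with the stated sample size $r=O(\frac{kd}{\epsilon^2}\log\frac{kd}{\epsilon})$ and weights $w_l=\frac1r\frac{T}{t_{\mathbb{P}}(P_l)}$, a single application of Hoeffding's inequality through Theorem~\ref{the-lang} followed by a union bound over the $O((4\sqrt{kd}/\epsilon)^{kd})$ points of the net $\Gamma$ gives, with constant probability, that $|\sum_{P_l\in\mathbb{P}}\mathcal{M}(P_l,Q)-\sum_{P_l\in\mathbb{S}}w_l\mathcal{M}(P_l,Q)|\le\epsilon\sum_{P_l\in\mathbb{P}}\mathcal{M}(P_l,Q)$ for \emph{every} $Q\in\Gamma$ simultaneously. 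Another union bound folds this constant-probability event in with the ones above.

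The substantive step is to transfer this guarantee from the finite net $\Gamma$ to an arbitrary admissible $Q$, i.e. one with $\mathcal{M}(Q,P_{i_0})\le 4L/n$. By Lemma~\ref{lem-issue2-n1} these are the only prototypes worth considering, and for such $Q$ each coordinate $q_j$ lies in $B(p^{i_0}_j,2\sqrt R)$ with $R=L/n$, so its nearest net point $\mathcal{N}(Q)\in\Gamma$ satisfies $\mathcal{M}(Q,\mathcal{N}(Q))\le\epsilon^2R$ by~(\ref{for-issue2-6}). Then I would split the estimation error for $Q$ by the triangle inequality~(\ref{for-largetri}) into three terms: (i) the discretization error on the exact sum, bounded via Lemma~\ref{lem-tri2} and $nR=\Theta(\Delta)$ as in~(\ref{for-issue2-8})--(\ref{for-issue2-9}); (ii) the net guarantee at $\mathcal{N}(Q)$ from Lemma~\ref{lem-issue2-grid}; and (iii) the discretization error on the weighted sampled sum, bounded as in~(\ref{for-issue2-10}) using Lemma~\ref{lem-tri2} together with the lower bound $t_{\mathbb{P}}(P_l)>\frac1{4n}$ of Lemma~\ref{lem-lowsens} to control $\max_{P_l\in\mathbb{S}}\frac{T}{t_{\mathbb{P}}(P_l)}$. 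Summing the three and using~(\ref{for-issue2-9}) to convert $\sum_{P_l}\mathcal{M}(P_l,\mathcal{N}(Q))$ back into $\sum_{P_l}\mathcal{M}(P_l,Q)$ yields exactly~(\ref{for-issue2-11}), which is~(\ref{for-coreset}) up to the constant hidden in $O(\epsilon)$; rescaling $\epsilon$ by that constant at the outset finishes the proof.

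The main obstacle is precisely term (iii), and it is why Lemmas~\ref{lem-tri2} and~\ref{lem-lowsens} were set up in advance: the candidate prototypes form an infinite family, so one cannot union-bound Theorem~\ref{the-lang} over all of them, and the shattering-dimension route of~\cite{langberg2010universal,DBLP:conf/stoc/FeldmanL11} is both hard to carry out here and liable to produce negative weights. The net argument avoids this, but the coefficients $T/t_{\mathbb{P}}(P_l)$ in the weighted sum can be as large as $\Theta(n)$, so a crude bound on $|\mathcal{M}(P_l,Q)-\mathcal{M}(P_l,\mathcal{N}(Q))|$ would blow the error up by a factor of $n$; controlling it requires both the small additive estimate $\mathcal{M}(Q,\mathcal{N}(Q))\le\epsilon^2R=\epsilon^2L/n$ and the matching lower bound on $t_{\mathbb{P}}(P_l)$, after which the stray terms land as $O(\epsilon)nR=O(\epsilon)L$ and are absorbed into $O(\epsilon)\sum_{P_l}\mathcal{M}(P_l,Q)$.
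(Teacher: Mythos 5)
Your proposal is correct and follows essentially the same route as the paper: condition on the good event for $P_{i_0}$ (Theorem~\ref{the-constant}, Lemma~\ref{lem-issue12}), apply Lemma~\ref{lem-issue2-grid} via Hoeffding plus a union bound over the grid $\Gamma$, and then transfer to an arbitrary admissible $Q$ through the decomposition~(\ref{for-largetri}), using Lemma~\ref{lem-tri2}, the bound $\mathcal{M}(Q,\mathcal{N}(Q))\leq\epsilon^2R$, and Lemma~\ref{lem-lowsens} to control the weighted term, exactly as in~(\ref{for-issue2-8})--(\ref{for-issue2-11}). The identification of term (iii) and the $\Theta(n)$-sized coefficients $T/t_{\mathbb{P}}(P_l)$ as the crux matches the paper's reasoning precisely.
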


Recall that Theorem~\ref{the-jl} tells us that the dimension can be reduced by Johnson-Lindenstrauss (JL)-transform. Thus, we directly have the following corollary. 

\begin{corollary}
\label{cor-coreset}
Given a high dimensional instance $\mathbb{P}$, we can obtain a sample $\mathbb{S}$ having the size of $\tilde{O}(\frac{k}{\epsilon^4})$, where with constant probability the inequality (\ref{for-coreset}) holds for any $k$-point set $Q\subset\mathbb{R}^d$ with $\mathcal{M}(Q, P_{i_0})\leq\frac{4L}{n}$. $\tilde{O}(\cdot)$ ignores logarithmic factors.
\end{corollary}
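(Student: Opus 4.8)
The plan is to obtain Corollary~\ref{cor-coreset} by composing the two reductions already at our disposal --- the Johnson--Lindenstrauss projection of Theorem~\ref{the-jl} and the core-set construction of Theorem~\ref{the-coreset} --- and then checking that the core-set guarantee survives the composition. First I would apply Theorem~\ref{the-jl} with error parameter $\epsilon$ to the high dimensional instance $\mathbb{P}\subset\mathbb{R}^d$, obtaining an instance $\mathbb{P}'$ living in $\mathbb{R}^{d'}$ with $d'=O(\log(nk)/\epsilon^2)$; the same projection sends the randomly picked $P_{i_0}$ to some $P'_{i_0}$.

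The second step is to run the construction behind Theorem~\ref{the-coreset} on $\mathbb{P}'$ instead of on $\mathbb{P}$ (with the error parameter rescaled by an absolute constant if needed). Because the ambient dimension is now $d'$, the grid $\Gamma$ built in Section~\ref{sec-issue2} has only $O((4\sqrt{kd'}/\epsilon)^{kd'})$ elements, so the union bound of Lemma~\ref{lem-issue2-grid} needs a sample of size $r=O(\frac{kd'}{\epsilon^2}\log\frac{kd'}{\epsilon})$. Substituting $d'=O(\log(nk)/\epsilon^2)$ gives $r=O\big(\frac{k\log(nk)}{\epsilon^4}\log\frac{k\log(nk)}{\epsilon^3}\big)=\tilde{O}(k/\epsilon^4)$, since $\log n$, $\log k$ and $\log(1/\epsilon)$ are absorbed by $\tilde{O}(\cdot)$. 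Let $\mathbb{S}$ be the set of indices of the sampled point sets, carrying the weights $w_l=\frac{1}{r}\frac{T}{t_{\mathbb{P}'}(P'_l)}$ returned by the construction; the same indices and weights define the claimed core-set for $\mathbb{P}$.

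It remains to transfer the core-set property back to $\mathbb{R}^d$. All quantities driving the construction --- the approximate sensitivities $t_{\mathbb{P}'}(P'_i)$ and their sum $T$, the constant-approximation cost $L'=\sum_{P'_l\in\mathbb{P}'}\mathcal{M}(P'_l,P'_{i_0})$, and the applicability of Theorems~\ref{the-constant} and~\ref{the-lang} --- depend only on pairwise matching costs among the input point sets and $P'_{i_0}$; since the projection preserves every pairwise squared distance among the $nk$ input points up to a factor $1\pm\epsilon$, these matching costs are essentially unchanged, so $T=O(1)$ and $L'=\Theta(L)$ still hold. For an arbitrary query $k$-point set $Q\subset\mathbb{R}^d$ with $\mathcal{M}(Q,P_{i_0})\leq\frac{4L}{n}$, I would invoke the cost-preservation argument underlying Theorem~\ref{the-jl} (from~\cite{DBLP:conf/mobihoc/DingSX16}): the value $\sum_{P_l\in\mathbb{P}}\mathcal{M}(P_l,Q)$ and its $\mathbb{R}^{d'}$ counterpart agree up to a factor $(\frac{1+\epsilon}{1-\epsilon})^2=1+O(\epsilon)$, and the restriction $\mathcal{M}(Q,P_{i_0})\leq\frac{4L}{n}$ carries over, up to the same factor, to $\mathcal{M}(\cdot,P'_{i_0})\leq\frac{4L'}{n}$, so Theorem~\ref{the-coreset} genuinely applies to the projected query. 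Chaining the $1+O(\epsilon)$ distortion of the projection with the $O(\epsilon)$ error of the low dimensional core-set, and rescaling $\epsilon$ by an absolute constant, yields inequality~(\ref{for-coreset}) for $Q$; the success probability stays a constant by a union bound over the (high-probability) event of Theorem~\ref{the-jl}, the event of Theorem~\ref{the-constant}, and the constant-probability event of Theorem~\ref{the-coreset}.

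The step I expect to be the real obstacle is precisely this last transfer: the query $Q$ is not one of the $nk$ input points, so the plain Johnson--Lindenstrauss distance bound does not apply to it, and one genuinely needs the stronger statement behind Theorem~\ref{the-jl} --- that an approximately optimal prototype of $\mathbb{P}'$ in low dimension lifts back to $\mathbb{R}^d$ with only a $(\frac{1+\epsilon}{1-\epsilon})^2$ loss --- together with a careful check that the hypothesis $\mathcal{M}(Q,P_{i_0})\leq 4L/n$ of Theorem~\ref{the-coreset} is preserved up to constants for $Q$'s low dimensional image.
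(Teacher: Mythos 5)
Your overall route---compose the JL projection of Theorem~\ref{the-jl} with the sampling construction of Theorem~\ref{the-coreset}, and read off the new sample size by substituting $d'=O(\log(nk)/\epsilon^2)$ for $d$---is exactly what the paper intends; the paper's own ``proof'' is a single sentence pointing to Theorem~\ref{the-jl}, and your size bookkeeping $r=O\big(\tfrac{kd'}{\epsilon^2}\log\tfrac{kd'}{\epsilon}\big)=\tilde{O}(k/\epsilon^4)$ is the right calculation. Your observation that the sensitivities $t_{\mathbb{P}}(P_i)$, the sum $T$, and the constant $L$ depend only on pairwise matching costs among the $P_i$ and $P_{i_0}$, and hence can be computed in the projected space with only $(1\pm\epsilon)$ distortion, is also correct and worth making explicit.

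The one place your argument does not quite hold water is the transfer step, and you yourself flag it. You write that for an arbitrary $Q\subset\mathbb{R}^d$ ``the value $\sum_{P_l\in\mathbb{P}}\mathcal{M}(P_l,Q)$ and its $\mathbb{R}^{d'}$ counterpart agree up to a factor $(\tfrac{1+\epsilon}{1-\epsilon})^2$.'' That statement is not available: the JL map is chosen once, and its distance guarantee applies only to the $nk$ input points (plus any fixed finite set you union-bound over); there is no ``$\mathbb{R}^{d'}$ counterpart'' of an arbitrary $Q$ for which the distortion bound holds, since $Q$ ranges over a continuum. Theorem~\ref{the-jl} is a statement about converting an approximate \emph{solution} found in $\mathbb{R}^{d'}$ back to $\mathbb{R}^d$, which works because the lifted solution is determined by the induced partition of the input points (whose pairwise distances \emph{are} preserved), not because the projection preserves the cost of every query $Q$. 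The coherent reading of Corollary~\ref{cor-coreset} --- and what the paper's experiments actually do --- is therefore: project, build the $\tilde{O}(k/\epsilon^4)$-size core-set so that inequality~(\ref{for-coreset}) holds for queries $Q'\subset\mathbb{R}^{d'}$, run the black-box algorithm on the low-dimensional core-set, and only then invoke Theorem~\ref{the-jl} to lift the resulting approximate prototype to $\mathbb{R}^d$ with an extra $(\tfrac{1+\epsilon}{1-\epsilon})^2$ factor. If you restate your transfer step that way, you sidestep the illegitimate appeal to cost preservation for arbitrary $Q$, and the rest of your proof goes through.
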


\subsection{Some Extensions}
\label{sec-coreext}


Here, we briefly introduce some extensions of our core-set construction on other metrics. Due to space limit, more details are shown in Appendix.

\noindent\textbf{(1).} Our core-set construction can be extended to $l_1$ norm, i.e., the squared distances are replaced by absolute distances in the matching cost~(\ref{for-match}). Actually, the analysis for $l_1$ norm is even easier than that for $l_2$ norm, since we can directly use triangle inequality rather than Lemma~\ref{lem-tri} or Lemma~\ref{lem-tri2} when solving the aforementioned two issues, bounding the sensitivities and discretizing the space of candidates for geometric prototype. 

A remaining issue for future work is that the dimension reduction result of Theorem~\ref{the-jl} is not applicable to $l_1$ norm, due to the fact that it is much harder to compute geometric median (Fermat-Weber point) than mean point~\cite{cohen2016geometric}. Fortunately, the high dimensional application, ensemble clustering, mentioned in Section~\ref{sec-intro} only uses $l_2$ norm, because the symmetric difference between two clusters corresponds to their squared distance in the space.

\noindent\textbf{(2).} We can also consider the case with weighted point sets for both $l_1$ and $l_2$ norm, i.e., each point of $P_i$ has a non-negative weight. To make the problem meaningful in practice, we require that each $P_i$ and the desired geometric prototype have the same total weight $W>0$; we can further assume $W$ and all the weights are integers by scaling and rounding in practice. Thus, the computation on the matching between two point sets becomes the problem of earth mover's distance (EMD)~\cite{rubner2000earth}. Fortunately, triangle inequality still holds for EMD because we assume they have equal total weight; as a consequence, we can bound the sensitivities for issue \textbf{(\Rmnum{1})}. For issue \textbf{(\Rmnum{2})}, we still discretize the space and build the set of $k$-point sets $\Gamma$ with the same cardinality of the unweighted case; the only difference is that we need to consider the total $O(W^k)$ possible distributions of the total weight $W$ over the $k$ points of each $k$-point set, which increases the size of the core-set with an extra $O(\frac{k\log W}{\epsilon^2})$.

\subsection{The Time Complexity}
\label{sec-time}

Suppose the complexity of computing $\mathcal{M}(A, B)$ is $h(k, d)$, then the running time for computing the core-set is simply $O(h(k, d)\cdot n)$ because we just need to compute each $\mathcal{M}(P_i, P_{i_0})$ so as to obtain the sensitivities for sampling (see Lemma~\ref{lem-issue12}). For simplicity, we can just use Hungarian algorithm~\cite{cormenintroduction} and $h(k, d)=O(k^2d+k^3)$, where the term $k^2d$ is for building the bipartite graph. In fact, this can be further improved by our following two observations. First, we just need to know the matching costs, rather than the matchings, for computing the upper bounds of the sensitivities in Lemma~\ref{lem-issue12}. Second, it is not necessary to always have the optimal matching costs. For example, if we compute a value $\mathcal{M}'(P_i, P_{i_0})$ for each $\mathcal{M}(P_i, P_{i_0})$ instead, such that $\mathcal{M}(P_i, P_{i_0})\leq \mathcal{M}'(P_i, P_{i_0})\leq c\mathcal{M}(P_i, P_{i_0})$ with some constant $c\geq1$, the resulting $T$ and each $t_{\mathbb{P}}(P_i)$ will increase by some appropriate constant factors correspondingly; in other words, the sample size in Theorem~\ref{the-coreset} will increase by only a constant factor. Some algorithms~\cite{cabello2008matching,indyk2007near,IT03} are designed for approximately estimating the matching cost, and their running times can  be nearly linear if the dimension $d$ is constant; in practical fields, several heuristic algorithms~\cite{pele2009fast} are also proposed for this purpose. 

For high dimensional case, we can apply JL-transform in advance, to reduce the dimensionality to be $O(\log(nk)/\epsilon^2)$ (Theorem~\ref{the-jl} and Corollary~\ref{cor-coreset}). A naive implementation of JL-transform by matrix multiplication has the complexity $O\big(\frac{1}{\epsilon^2}nkd\log (nk)\big)$~\cite{dasgupta2003elementary}, and several even faster and practical algorithms have been studied before~\cite{achlioptas2003database,liberty2009mailman,ailon2009fast}.

\section{Experiments}
\label{sec-exp}

To show the advantage of using core-set for the problem of geometric prototype, we study the two important applications introduced in Section~\ref{sec-intro}, Wasserstein barycenter and ensemble clustering. For each application, we run the most recent state of the art algorithm on the original dataset and core-sets with different size levels. In general, our experiments suggest that running the algorithm on a small core-set can achieve very close performance and greatly reduce the running time. 
All of the experimental results are obtained on a Windows workstation with 2.4GHz Intel Xeon E5-2630 v3 CPU and 32GB DDR4 2133MHz Memory; the algorithms are implemented in Matlab R2016b.
 
 
\textbf{Wasserstein barycenter.} MNIST~\cite{lecun1998gradient} is a popular benchmark dataset of handwritten digits from $0$ to $9$. For each digit, we generate a set of 3000 $28\times 28$ grayscale images including $10\%$ noise (i.e., $300$ images randomly selected from the other $9$ digits). First, we represent the $28\times 28$ pixels by $60$ weighted $2D$ points via $k$-means clustering~\cite{lloyd1982least}: group the pixels into $60$ clusters and each cluster is represented by its cluster center; each center has the weight equal to the total pixel values of the cluster. Therefore the problem of Wasserstein barycenter becomes an instance of geometric prototype with $n=3000$, $k=60$, and $d=2$. 
 
\textbf{Ensemble clustering.} To construct an instance of ensemble clustering, we generate a synthetic dataset of $2000$ points randomly sampled from $k=50$ Gaussian distributions in $\mathbb{R}^{100}$; we apply $k$-means clustering $1000$ times, where each time has a different initialization for the $k$ mean points, to generate $1000$ different clustering solutions. 
According to the model introduced by~\cite{DBLP:conf/mobihoc/DingSX16}, each instance is a geometric prototype problem with $1000$ different $50$-point sets in $\mathbb{R}^{2000}$. We apply JL-transform to reduce the dimensionality from $2000$ to $100$, before constructing the core-set and running the algorithm; we just use the simplest random matrix multiplication to implement JL-transform~\cite{dasgupta2003elementary}, where actually this step only takes about $5\%$ of the whole running time of the experiments.

\begin{figure}[]
\begin{minipage}[t]{0.5\linewidth}
  \centering
\includegraphics[scale=0.22]{esa_obj}
  \vspace{-0.15in}
     \caption{Normalized objective value.}
  \label{fig-obj}
\end{minipage}
\hspace{-0.15in}
\begin{minipage}[t]{0.5\linewidth}
\centering
\includegraphics[scale=0.22]{esa_time}
  \vspace{-0.15in}
\caption{Normalized running time.}
  \label{fig-time}
  \end{minipage}
  \begin{minipage}[t]{0.5\linewidth}
  \centering
\includegraphics[scale=0.22]{esa_high}
  \vspace{-0.15in}
     \caption{Percentage of misclustered items.}
  \label{fig-dis1}
\end{minipage}
\hspace{-0.15in}
\begin{minipage}[t]{0.5\linewidth}
\centering
\includegraphics[scale=0.22]{esa_low}
  \vspace{-0.15in}
\caption{Matching cost to ground truth.}
  \label{fig-dis2}
  \end{minipage}
  \vspace{-0.15in}
\end{figure}

For both applications, we construct the core-sets using the method in Section~\ref{sec-coreset}; we vary the core-set size from $5\%$ to $30\%$ of the input size. To construct the core-set, we need to compute the matching cost $\mathcal{M}(P_i, P_{i_0})$ as discussed in Section~\ref{sec-time}: for the high dimensional application (i.e., ensemble clustering), we just use Hungarian algorithm~\cite{cormenintroduction}; for the low dimensional application (i.e., Wasserstein barycenter), we use two existing popular algorithms for computing EMD, {\em Network simplex algorithm}~\cite{ahuja1993network} and the heuristic but faster EMD algorithm~\cite{pele2009fast}. 
As the black boxes, we use the algorithms in~\cite{ye2017fast} and~\cite{DBLP:conf/mobihoc/DingSX16} for Wasserstein barycenter and Ensemble clustering, respectively. For each application, we run the same algorithm on the original input dataset and corresponding core-sets, and consider three criteria: running time, objective value (in Definition~\ref{def-gp}), and difference to ground truth.  For ensemble clustering, we compute the percentage of misclustered items of the obtained prototype as the difference to ground truth. For Wasserstein barycenter, since it is difficult to determine a unique ground truth for each handwritten digit, we directly use the prototype obtained from the original input dataset as the ground truth; then we compute its matching cost to the prototype obtained from core-set, denoted by $x$, as well as the average matching cost over the input images to the ground truth, denoted by $Ave$; finally, we obtain the ratio $x/Ave$. In general, the lower the ratio $x/Ave$, the closer the obtained prototype to the ground truth (comparing with the input images).

\textbf{Results.} For each application, we run $50$ trials and report the average results.
 Figure~\ref{fig-obj} shows the obtained normalized objective values over the base line (i.e., the objective value obtained on the original input dataset), which are all lower than $1.2$; that means our core-sets are good approximations for the original data. More importantly, the running times are significantly reduced in Figure~\ref{fig-time}, e.g., for the core-set having $5\%$ of the input data size, the algorithm (containing the core-sets construction) only runs within $10\%$-$17\%$ of the original time. In addition, our obtained prototypes are very close to the corresponding ground truths, even for the core-set at the level $5\%$. Figure~\ref{fig-dis1} provides the percentages of misclustered items for ensemble clustering, which are around $8\%$-$12\%$. Figure~\ref{fig-dis2} shows the values of $x/Ave$, which are around $0.25$. For Wasserstein barycenter, we can see Network simplex algorithm and fast EMD algorithm achieve very similar qualities, but fast EMD only takes about $60\%$ of the running time of Network simplex algorithm.


%

\newpage

\bibliography{nips_2017}

\newpage

\section{Appendix}

\subsection{The Hardness Proof for Geometric Prototype}
\label{sec-hard}

\begin{theorem}
\label{mfptas}
Finding the geometric prototype of a given instance is NP-hard and has no FPTAS even if $k=2$ in high dimensional space, unless P=NP.
\end{theorem}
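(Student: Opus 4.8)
The plan is to follow Dasgupta's strategy for $2$-means hardness~\cite{D08} --- recasting the $k=2$ optimum as the maximum of a quadratic form over the Boolean cube --- carried out inside the ``chromatic $2$-means'' formulation of the remark below Definition~\ref{def-gp}, where the pairing constraint actually helps by forcing the two clusters to have equal size. For $k=2$ an instance is a family of pairs $P_i=\{a_i,b_i\}$, $i=1,\dots,n$, and a prototype $\{g_1,g_2\}$ assigns, for each $i$, one of $a_i,b_i$ to the cluster of $g_1$ and the other to that of $g_2$; encoding this choice by $s_i\in\{+1,-1\}$ and writing $d_i=a_i-b_i$, we get $|C_1|=|C_2|=n$, $\mu_1+\mu_2=\tfrac1n\sum_i(a_i+b_i)$ independent of the $s_i$, and $\mu_1-\mu_2=\tfrac1n\sum_i s_id_i$. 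Combining $\sum_{p\in C}\|p-\mu_C\|^2=\sum_{p\in C}\|p\|^2-|C|\,\|\mu_C\|^2$ with $\|\mu_1\|^2+\|\mu_2\|^2=\tfrac12\|\mu_1+\mu_2\|^2+\tfrac12\|\mu_1-\mu_2\|^2$ shows that, for a fixed assignment $s$, the best $\{g_1,g_2\}$ are the cluster means and the objective~(\ref{for-def-gp}) equals $C_0-\tfrac1{2n}\big\|\sum_i s_id_i\big\|^2$ with $C_0$ independent of $s$. Since the matching inside $\mathcal{M}(P_i,\cdot)$ is itself a per-pair minimization over $s_i$, finding the optimal geometric prototype with $k=2$ is \emph{exactly} the problem $\max_{s\in\{\pm1\}^n}\big\|\sum_i s_id_i\big\|^2=\max_{s}\,s^{\top}Gs$, where $G$ is the Gram matrix of $d_1,\dots,d_n$.

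Next I would choose the $d_i$'s so that this cube-quadratic is a canonical NP-hard problem. Given a graph $\mathcal{G}=(V,E)$ with $|V|=n$, orient its edges arbitrarily and let $d_i\in\mathbb{R}^{|E|}$ be the signed incidence vector of vertex $i$; then $G$ is the Laplacian of $\mathcal{G}$, so $s^{\top}Gs=\sum_{(i,j)\in E}(s_i-s_j)^2=4\,\mathrm{cut}_{\mathcal{G}}(s)$ for $s\in\{\pm1\}^n$. Taking $a_i=d_i/2$ and $b_i=-d_i/2$ yields a geometric prototype instance consisting of $n$ pairs of $\{-\tfrac12,0,\tfrac12\}$-vectors in $\mathbb{R}^{|E|}$ (polynomial size, with dimension $d=|E|$ that grows with the input) whose optimum equals $|E|-\tfrac2n\,\mathrm{MaxCut}(\mathcal{G})$. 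Since \textsc{Max-Cut} is NP-hard, so is geometric prototype with $k=2$ in high dimension; the reduction genuinely needs the dimension to grow with the input, consistent with the low-dimensional case being left open.

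For ``no FPTAS'' I would note that the optimum lies in $[\,|E|(1-2/n),\,|E|\,]$ --- a positive and polynomially bounded range for $n\ge 3$ --- while increasing $\mathrm{MaxCut}(\mathcal{G})$ by one unit decreases the optimum by exactly $2/n$. A hypothetical FPTAS run with $\epsilon=\tfrac1{4n|E|}$, which is polynomial in the input size, would then determine the optimum up to an additive error below $\tfrac1{4n}<\tfrac2n$, hence recover $\mathrm{MaxCut}(\mathcal{G})$ exactly in polynomial time, forcing P $=$ NP. Equivalently, the objective takes at most $|E|+1$ well-separated rational values, so NP-hardness of the exact problem already precludes an FPTAS.

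The step I expect to be the main obstacle is the algebraic collapse of the first paragraph: one must verify that the chromatic pairing constraint is genuinely ``free'', i.e. that it neither rules out the assignment realizing $\mathrm{MaxCut}$ nor admits a cheaper partition --- which is exactly what the closed form $C_0-\tfrac1{2n}\|\sum_i s_id_i\|^2$ guarantees, since every $s\in\{\pm1\}^n$ is feasible and the cost depends on the partition only through $s$. A secondary point to check is that the constructed instance (or, if one prefers to invoke it verbatim, Dasgupta's original construction) has polynomially bounded integer/rational coordinates and objective values, so that the gap argument for ``no FPTAS'' is valid; this is immediate for the \textsc{Max-Cut} encoding above.
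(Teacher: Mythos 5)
Your proof is correct, but it takes a genuinely different route from the paper's. The paper reuses Dasgupta's NAE3SAT$^*$ construction and his matrix $D$ verbatim, observes as a ``byproduct'' that the two claims in~\cite{D08} force a satisfying chromatic (equal-size) partition, and then quantifies the satisfiable/unsatisfiable gap by choosing $\delta=\frac{1}{5m+2n}$. You instead exploit the chromatic structure directly and from scratch: for $k=2$ every pairing is a sign vector $s\in\{\pm1\}^n$, so the parallelogram identity collapses the objective to $C_0-\tfrac{1}{2n}\|\sum_i s_id_i\|^2=C_0-\tfrac{1}{2n}s^{\top}Gs$, reducing the geometric prototype problem with $k=2$ \emph{exactly} to maximizing a PSD quadratic form over the cube; choosing $d_i$ to be signed incidence vectors makes $G$ a graph Laplacian and yields a clean reduction from \textsc{Max-Cut}, with the no-FPTAS gap coming for free since the optimum ranges over at most $|E|+1$ values spaced $2/n$ apart. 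Your argument is more self-contained (no reliance on Dasgupta's lemmas or embedding, and \textsc{Max-Cut} is a more standard source problem than NAE3SAT$^*$) and the pairing constraint is a genuine asset rather than an obstacle, since it makes $\mu_1+\mu_2$ constant; the paper's approach, by contrast, makes clear that the instances produced in the known $2$-means hardness proof are already hard in the chromatic regime, which ties the two problems together more explicitly. One minor point worth tightening in your write-up: an FPTAS returns a near-optimal \emph{solution}, not the value $\mathrm{OPT}$ itself, so to conclude you should note that the returned prototype induces a sign vector $s$ whose exact chromatic-mean cost lies in $[\mathrm{OPT},(1+\epsilon)\mathrm{OPT}]$ and is of the form $|E|-\tfrac{2}{n}k$, so the unique such value in that window identifies $k=\mathrm{MaxCut}(\mathcal{G})$.
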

\begin{proof}

It is sufficient to only consider the simplest case $k=2$ in our proof. As mentioned in the remark below Definition~\ref{def-gp}, the problem of geometric prototype is equivalent to finding a chromatic partition on the $2n$ points $\cup^n_{i=1}\{p^i_1, p^i_2\}$ to form two clusters, such that the sum of their variances is minimized. 
%
%
Based on this observation, we make use of the construction by Dasgupta for the
NP-hardness proof of $2$-means clustering problem in high dimension~\cite{D08}. His proof reduces from the {\em NAE3SAT} problem. The setting of NAE3SAT is similar to 3-SAT, where the only difference is that it requires at least one literal to be true and at least one literal to be false in each clause. Dasgupta considers a special case of NAE3SAT, denoted by NAE3SAT$^*$ (we refer the reader to \cite{D08} for more details on NAE3SAT$^*$). 
For better understanding our
ideas, below we sketch his construction.
For any instance $\phi$ of NAE3SAT$^*$ with  variables $\{x_{1}, \cdots, x_{n}\}$ and $m$ clauses,  construct a $2n \times 2n$ matrix $D$ as follows. For each entry $D_{\alpha, \beta}$, the indices  correspond to  $\{x_{1}, \cdots, x_{n}\}$ when they are in the range of $[1,n]$, and to $\{\overline{x}_{1}, \cdots, \overline{x}_{n}\}$ when they are in the range of $[n+1,2n]$.

$$D_{\alpha, \beta} = \left\{ \begin{array}{ll}
0 & \textrm{if $\alpha=\beta$}\\
1+\Delta & \textrm{if $\alpha=\overline{\beta}$}\\
1+\delta & \textrm{if $\alpha\sim\beta$}\\
1 & \textrm{otherwise,}
\end{array} \right.$$
where $\Delta$, $\delta$ are two constants satisfying inequalities $0<\delta<\Delta<1$ and $4\delta m<\Delta\leq 1-2\delta n$, and $\alpha\sim\beta$ means that either $\alpha$ and $\beta$ or $\overline{\alpha}$ and $\overline{\beta}$ appear together in a clause. 
Also, $D$ can be embedded into $\mathbb{R}^{2n}$, i.e., there exist $2n$ points in $\mathbb{R}^{2n}$ with $D$ as their pairwise squared distance matrix~\cite{D08}. With a slight abuse of notation, we also use $x_i$ and $\overline{x}_i$ to denote their corresponding embedding points. 
Let $C_{1}$ and $C_{2}$ be the two clusters of any $2$-mean clustering of the $2n$ embedding points. Then \cite{D08} provides the following two important claims:
\begin{enumerate}
\item If $C_1$ or $C_2$ contains both  $x_i$ and its negation $\overline{x}_i$ for some $i$, the clustering cost is larger than $n-1+\frac{2\delta m}{n}$.

\item $\phi$ is satisfiable if and only if the $2n$ embedding points admit a $2$-mean clustering having the cost no more than $n-1+\frac{2\delta m}{n}$.
\end{enumerate}

The second claim directly implies the NP-hardness of $2$-mean clustering. Moreover, a byproduct is that 
the above two claims jointly ensure that $\phi$ is satisfiable if and only if each of $C_1$ and $C_2$ contains exactly $n$ points, i.e., a chromatic partition on $\cup^n_{i=1}\{x_i, \overline{x}_i\}$, and the clustering cost is no more than $n-1+\frac{2\delta m}{n}$. Therefore, our geometric prototype problem is also NP-hard.

%

%

Next, we show that the problem of geometric prototype has no FPTAS in high dimensional space unless P=NP. To see this, we still use the same construction. From the definition of NAE3SAT$^*$,  we know that $\phi$ is unsatisfiable if and only if for any chromatic partition of $\cup^n_{i=1}\{x_{i}, \overline{x_{i}}\}$, there exists one clause in $\phi$ such that the three points corresponding to the three literals in this clause are clustered into the same cluster. Recall the fact that given a set of points $A=\{a_1, a_2, \cdots, a_h\}$ in Euclidean space, 
\begin{eqnarray}
\sum^h_{i=1}||a_i-\mu_A||^2=\frac{1}{2h}\sum^h_{i=1}\sum^h_{j=1}||a_i-a_j||^2\label{for-fact1}
\end{eqnarray}
where $\mu_A=\frac{1}{h}\sum^h_{i=1}a_i$~\cite{D08}. 
Hence, based on the construction of the matrix $D$ and (\ref{for-fact1}), we know the total clustering cost for any chromatic partition is at least
\begin{eqnarray}
&&\frac{2}{2n}\Bigg(2{n\choose 2}+2\delta\sum_{clauses}(\textnormal{1 if clause is split between two clusters; 3 otherwise})\Bigg)\nonumber\\
&\geq& \frac{2}{2n}\Bigg(2{n\choose 2}+2\delta\big((m-1)+3\big)\Bigg)=n-1+\frac{2}{n}(m+2)\delta
\end{eqnarray}
if $\phi$ is unsatisfiable. 
Thus, the ratio between the minimum chromatic partition cost of an unsatisfiable instance and the upper bound cost of a satisfiable instance is  
\begin{eqnarray}
\eta=\frac{n-1+\frac{2}{n}(m+2)\delta}{n-1+\frac{2\delta m}{n}}=1+\frac{\frac{4}{n}\delta}{n-1+\frac{2\delta m}{n}}.
\end{eqnarray}
If we let $\delta=\frac{1}{5m+2n}$, then $\eta=1+\frac{\frac{4}{n}\delta}{n-1+\frac{2\delta m}{n}}=1+\frac{4}{n(5m+2n)(n-1)+2m}$.

Suppose that there exists an FPTAS for finding geometric prototype with $k=2$. Let $\epsilon <\frac{4}{n(5m+2n)(n-1)+2m}$,  then we directly know that the cost of a $(1+\epsilon)$-approximation of geometric prototype is  less than $n-1+\frac{2}{n}(m+2)\delta$ if and only if $\phi$ is satisfiable. Since $\frac{1}{\epsilon}$ is polynomial on $m$ and $n$, NAE3SAT$^*$ can also be solved in polynomial time (and one of the chromatic clusters yields a valid assignment). Obviously this can only happen when P=NP.
\end{proof}

\subsection{Proof of Lemma~\ref{lem-tri}}
\label{sec-plem-tri}
Let $A=\{a_1, a_2, \cdots, a_k\}$, $B=\{b_1, b_2, \cdots, b_k\}$, and $C=\{c_1, c_2, \cdots, c_k\}$. Without loss of generality, we assume that the induced permutations of $\mathcal{M}(A, C)$ and $\mathcal{M}(C, B)$ are both $\pi(j)=j$ for $1\leq j\leq k$ (since these two permutations are independent with each other). Thus,
\begin{eqnarray}
\mathcal{M}(A, B)&\leq& \sum^k_{j=1}||a_j-b_j||^2\nonumber\\
&\leq&\sum^k_{j=1}2\big(||a_j-c_j||^2+||c_j-b_j||^2\big)\nonumber\\
&=&2\mathcal{M}(A,C)+2\mathcal{M}(C,B).
\end{eqnarray}

\subsection{Weighted Case for Core-set}
\label{sec-weight-coreset}

Here, we show the details for analyzing the core-set construction of the more general weighted case. The input is still a set of point sets $\mathbb{P}=\{P_1, P_2, \cdots, P_n\}$ with each $P_i$ containing $k$ points $\{p^i_1, p^i_2, \cdots, p^i_k\}\subset\mathbb{R}^d$; the only difference is that each $p^i_j$ has a non-negative weight $\alpha^i_j$ and $\sum^k_{j=1}\alpha^i_j=W\geq 0$. We also restrict the prototype $g(\mathbb{P})$ having $k$ weighted points with the same total weight $W$. To simplify our problem in practice, we can assume all the point weights and total weight $W$ are integers. Moreover, we replace the matching cost in Definition~\ref{def-match} by Earth Mover's Distance (EMD)~\cite{rubner2000earth}:
\begin{definition}[$\mathcal{EMD}_1(A, B)$, $\mathcal{EMD}_2(A, B)$]
\label{def-emd}
Given two point sets $A=\{a_1, a_2, \cdots, a_k\}$ and $B=\{b_1, b_2, \cdots, b_k\}$ in $\mathbb{R}^d$, where each $a_j$ (resp., $b_j$) has a weight $\alpha_j$ (resp., $\beta_j$) $\geq 0$, $\sum^k_{j=1}\alpha_j=\sum^k_{j=1}\beta_j=W$, then 
\begin{eqnarray}
\mathcal{EMD}_1(A, B)&=&\min_{F=\{f_{jl}\}}\sum^k_{j=1}\sum^k_{l=1}f_{jl}||a_j-b_l||;\\
\mathcal{EMD}_2(A, B)&=&\min_{F=\{f_{jl}\}}\sum^k_{j=1}\sum^k_{l=1}f_{jl}||a_j-b_l||^2.
\end{eqnarray}
$F=\{f_{jl}\}$ indicates a feasible flow from $A$ to $B$, i.e., each $f_{jl}\geq 0$, $\sum^k_{l=1}f_{jl}=\alpha_j$, $\sum^k_{j=1}f_{jl}=\beta_l$, and $\sum^k_{j=1}\sum^k_{l=1}f_{jl}=W$.
\end{definition}

Since we assume all the weights are integers, each point $a_j$ (resp., $b_j$) can be viewed as $\alpha_j$ (resp., $\beta_j$) overlapping points. Also, we know that an integer-valued solution exists for $F=\{f_{jl}\}$ by using linear programming~\cite{goelflow}. Thus, we directly have the following results analogous to Lemma~\ref{lem-tri} and Lemma~\ref{lem-tri2} (just replace $k$ by $W$ in the proofs).

\begin{lemma}
\label{lem-tri-weight}
Let $A$, $B$, and $C$ be three weighted $k$-point sets in $\mathbb{R}^d$, where each point has a non-negative integer weight and their total weights are equal. Then 
\begin{eqnarray}
\mathcal{EMD}_1(A, B)&\leq&\mathcal{EMD}_1(A, C)+\mathcal{EMD}_1(C, B);\nonumber\\
\mathcal{EMD}_2(A, B)&\leq&2\mathcal{EMD}_2(A, C)+2\mathcal{EMD}_2(C, B);\nonumber\\
|\mathcal{EMD}_2(A, B)-\mathcal{EMD}_2(A, C)|&\leq& (1+\frac{1}{\epsilon})\mathcal{EMD}_2(B, C)+\epsilon\mathcal{EMD}_2(A, B).
\end{eqnarray}
\end{lemma}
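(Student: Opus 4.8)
The plan is to reduce the weighted (EMD) setting back to the unweighted matching setting by a token-duplication argument, and then invoke Lemma~\ref{lem-tri} and Lemma~\ref{lem-tri2} verbatim. Since every point weight is a non-negative integer and the three point sets share the common total weight $W$, I would associate to each weighted $k$-point set $A=\{a_1,\dots,a_k\}$ (with weights $\alpha_1,\dots,\alpha_k$) the $W$-point multiset $\tilde A$ obtained by taking $\alpha_j$ copies of $a_j$ for every $j$, and likewise form $\tilde B$ and $\tilde C$. The first step is to establish the identities $\mathcal{EMD}_1(A,B)=\mu_1(\tilde A,\tilde B)$ and $\mathcal{EMD}_2(A,B)=\mathcal{M}(\tilde A,\tilde B)$, where $\mu_1(\cdot,\cdot)$ (resp.\ $\mathcal{M}(\cdot,\cdot)$) is the minimum cost of a perfect matching between two $W$-point multisets under cost $\|\cdot\|$ (resp.\ $\|\cdot\|^2$). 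The inequality "$\le$" is immediate: any perfect matching of $\tilde A$ onto $\tilde B$ yields a feasible flow $F=\{f_{jl}\}$ in the sense of Definition~\ref{def-emd} by setting $f_{jl}$ equal to the number of copies of $a_j$ matched to copies of $b_l$, at the same cost. For "$\ge$" I would use the fact, already recorded in the paragraph preceding the lemma (integrality of the transportation polytope via linear programming, \cite{goelflow}), that some cost-optimal flow $F^\star$ is integer-valued; such an $F^\star$ realizes a bijection of the $W$ tokens of $\tilde A$ onto the $W$ tokens of $\tilde B$ of equal cost. Because all three total weights equal $W$, the multisets $\tilde A,\tilde B,\tilde C$ all have cardinality exactly $W$ and are thus legitimate inputs to the unweighted lemmas.

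Given this reduction, the second and third inequalities of Lemma~\ref{lem-tri-weight} follow with no further computation: applying Lemma~\ref{lem-tri} to $\tilde A,\tilde B,\tilde C$ gives $\mathcal{M}(\tilde A,\tilde B)\le 2\mathcal{M}(\tilde A,\tilde C)+2\mathcal{M}(\tilde C,\tilde B)$, and applying Lemma~\ref{lem-tri2} gives $|\mathcal{M}(\tilde A,\tilde B)-\mathcal{M}(\tilde A,\tilde C)|\le(1+\tfrac{1}{\epsilon})\mathcal{M}(\tilde B,\tilde C)+\epsilon\mathcal{M}(\tilde A,\tilde B)$; translating $\mathcal{M}(\cdot,\cdot)$ back to $\mathcal{EMD}_2(\cdot,\cdot)$ via the first step yields exactly the two claimed bounds. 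This is precisely the "replace $k$ by $W$ in the proofs" remark made in the text. For the first inequality (the exact triangle inequality for $\mathcal{EMD}_1$) I would argue directly at the token level: fix an optimal matching $\pi$ of $\tilde A$ onto $\tilde C$ and an optimal matching $\rho$ of $\tilde C$ onto $\tilde B$; then $\rho\circ\pi$ is a valid matching of $\tilde A$ onto $\tilde B$, so by the ordinary triangle inequality in $\mathbb{R}^d$ applied to each matched triple of tokens and then summing, $\mathcal{EMD}_1(A,B)=\mu_1(\tilde A,\tilde B)\le \sum_m\|\tilde a_m-\tilde c_{\pi(m)}\|+\sum_m\|\tilde c_{\pi(m)}-\tilde b_{\rho(\pi(m))}\|=\mu_1(\tilde A,\tilde C)+\mu_1(\tilde C,\tilde B)=\mathcal{EMD}_1(A,C)+\mathcal{EMD}_1(C,B)$, where the reindexing of the second sum uses that $\pi$ is a bijection.

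I expect the only genuinely delicate point to be this first step — the equivalence between $\mathcal{EMD}$ and a minimum-cost matching on the duplicated point sets — and inside it the existence of an integer-valued optimal transportation plan; once that is in place, everything else is a black-box invocation of Lemma~\ref{lem-tri} and Lemma~\ref{lem-tri2} plus the elementary triangle inequality. I would also add a brief remark that the reduction relies on the standing assumption that all point weights and $W$ are integers, which is harmless by scaling and rounding, as already discussed in Section~\ref{sec-coreext}.
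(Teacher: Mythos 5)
Your proposal is correct and follows exactly the route the paper sketches: viewing each integer-weighted point as that many overlapping unweighted tokens, using the integrality of the transportation LP (\cite{goelflow}) to identify $\mathcal{EMD}_1$ and $\mathcal{EMD}_2$ with minimum-cost perfect matchings on the $W$-token multisets, and then invoking Lemma~\ref{lem-tri}, Lemma~\ref{lem-tri2}, and the plain triangle inequality on the tokens. You have simply spelled out the paper's one-line remark ``just replace $k$ by $W$ in the proofs'' in full detail; the only fine point you add, correctly, is that the $\mathcal{EMD}_1$ bound needs the exact triangle inequality at the token level rather than the factor-$2$ argument of Lemma~\ref{lem-tri}.
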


Based on Lemma~\ref{lem-tri-weight} and the same ideas in Section~\ref{sec-issue1} and~\ref{sec-issue2}, we can solve the two issues, bounding each sensitivity and discretizing the space of candidates for geometric prototype, for the weighted case. The only difference for issue \textbf{(\Rmnum{2})} is that we also need to consider the total $O(W^k)$ possible distributions of the total weight $W$ over the $k$ points for geometric prototype. Thus,  $\Gamma$ contains $O\big((\frac{4\sqrt{kd}}{\epsilon})^{kd}W^k\big)$ different weighted $k$-point sets in total. Consequently, the size of the sample $\mathbb{S}$ in Theorem~\ref{the-coreset} will be $O(\frac{kd}{\epsilon^2}\log\frac{kd}{\epsilon}+\frac{k}{\epsilon^2}\log W)$.

\end{document}